\documentclass[acmsmall]{acmart}

\usepackage{xspace}
\usepackage{ifthen}
\usepackage{marginnote}
\usepackage{booktabs,tabularx,multirow}

\usepackage{microtype}

\usepackage{graphicx}
\usepackage{caption}
\usepackage{enumitem}
\usepackage{xcolor}

\usepackage{mathtools,amssymb,amsfonts,bm}
\usepackage{amsthm}

\usepackage{algorithm}
\usepackage{algorithmic}

\usepackage{titlecaps}

\usepackage{subcaption}

\newtheorem{definition}{Definition}
\newtheorem{example}{Example}

\makeatletter
\g@addto@macro\normalsize{%
  \setlength{\abovedisplayskip}{0pt}%
  \setlength{\abovedisplayshortskip}{0pt}%
  \setlength{\belowdisplayskip}{0pt}%
  \setlength{\belowdisplayshortskip}{0pt}%
}

\usepackage{array}
\usepackage{arydshln}
\makeatother

\usepackage{tikz}

\newcommand{\junhao}[1]{}
\newcommand{\daomin}[1]{}

\makeatletter
\def\thm@space@setup{%
  \thm@preskip=2pt
  \thm@postskip=2pt
}
\makeatother
\newcommand{\problem}{taxonomy maintenance in the wild\xspace}

\newcommand{\method}{\textsc{GIST}\xspace}

\newcommand{\methods}{\textsc{GIST}$^*$\xspace}

\newcommand{\wtob}{\ensuremath{\Phi_{\text{W2B}}}\xspace}
\newcommand{\btow}{\ensuremath{\Phi_{\text{B2W}}}\xspace}

\newcommand{\taxo}{\ensuremath{\mathcal{T}}\xspace}
\newcommand{\edge}{\ensuremath{\mathcal{E}}\xspace}
\newcommand{\node}{\ensuremath{\mathcal{V}}\xspace}
\newcommand{\repo}{\ensuremath{\mathcal{P}}\xspace}

\newcommand{\ctxabs}{\ensuremath{\textsc{abs}}} 

\newcommand{\nodefone}{\emph{Node Soft F1}\xspace}
\newcommand{\edgefone}{\emph{Edge Soft F1}\xspace}

\newcommand{\taxoadapt}{TaxoAdapt\xspace}
\newcommand{\taxoalign}{TaxoAlign\xspace}
\newcommand{\taxogen}{TaxoGen\xspace}
\newcommand{\hyperexpan}{HyperExpan\xspace}

\newcommand{\methodvar}[2]{$\text{GIST}_{\textsc{#1},\textsc{#2}}$\xspace}
\newcommand{\adaptvar}[2]{$\text{TaxoAdapt}_{\textsc{#1},\textsc{#2}}$\xspace}

\newcommand{\vol}{\operatorname{Vol}}

\newif\ifshowrevtag
\showrevtagfalse   

\definecolor{revCommon}{RGB}{231,76,60}    
\definecolor{revR1}{RGB}{230,126,34}       
\definecolor{revR2}{RGB}{52,152,219}       
\definecolor{revR3}{RGB}{142,68,173}       

\newif\ifshowrevtext
\showrevtextfalse  

\newif\ifshowreview
\showreviewfalse   

\AtBeginDocument{%
  }

\setcopyright{cc}
\setcctype{by}
\acmJournal{PACMMOD}
\acmYear{2026} \acmVolume{4} \acmNumber{4 (SIGMOD)} \acmArticle{289}
\acmMonth{9} \acmDOI{10.1145/3837127}

\begin{document}
\setlength{\abovecaptionskip}{0pt}
\setlength{\belowcaptionskip}{0pt}
\title[Taxonomy Maintenance in the Wild]{Taxonomy Maintenance in the Wild over Evolving Scholarly Data: Reliability, Efficiency, and Cost-Effectiveness}

\author{Daomin Ji}
\authornote{This work was done when Daomin Ji was a visiting student at The University of Queensland.}
\email{daomin.ji@student.rmit.edu.au}
\orcid{0009-0000-0037-3614}
\affiliation{%
  \institution{RMIT University}
  \city{Melbourne}
  \state{Victoria}
  \country{Australia}
}
\affiliation{%
  \institution{The University of Queensland}
  \city{Brisbane}
  \state{Queensland}
  \country{Australia}
}
\author{Hui Luo}
\email{huil@uow.edu.au}
\orcid{0000-0002-7299-031X}
\affiliation{%
  \institution{University of Wollongong}
  \city{Wollongong}
  \state{New South Wales}
  \country{Australia}
}
\author{Zhifeng Bao}
\authornote{Zhifeng Bao is the corresponding author.}
\email{zhifeng.bao@uq.edu.au}
\orcid{0000-0003-2477-381X}
\affiliation{%
  \institution{The University of Queensland}
  \city{Brisbane}
  \state{Queensland}
  \country{Australia}
}
\author{Junhao Gan}
\email{junhao.gan@unimelb.edu.au}
\orcid{0000-0001-9101-1503}
\affiliation{%
  \institution{The University of Melbourne}
  \city{Melbourne}
  \state{Victoria}
  \country{Australia}
}
\author{Zi Huang}
\email{huang@itee.uq.edu.au}
\orcid{0000-0002-9738-4949}
\affiliation{%
  \institution{The University of Queensland}
  \city{Brisbane}
  \state{Queensland}
  \country{Australia}
}

\renewcommand{\shortauthors}{Daomin Ji, Hui Luo, Zhifeng Bao, Junhao Gan, \& Zi Huang}

\begin{abstract}
The rapid growth of scientific publications makes scholarly taxonomies quickly obsolete. 
We study \emph{taxonomy maintenance in the wild}, a new problem that moves beyond static construction by continuously adapting taxonomies to evolving scholarly repositories, such as arXiv, for a given research topic.
We propose \method, a robust framework for maintaining evolving taxonomies.
Unlike purely LLM-centric approaches, \method grounds structure induction in expert-curated evidence by extracting partial hierarchies from the ``Related Work'' sections of papers.
It integrates these partial taxonomies into a unified global taxonomy in a \emph{geometric box-embedding space}, where box containment encodes the inductive bias of is-a relations.
To connect semantics with geometric structure, \method learns a bidirectional mapping between word embeddings and box embeddings.
For efficient incremental updates, \method uses novelty-aware coreset selection to update the model with representative historical signals and new evidence, avoiding costly full retraining.
To handle high-velocity paper streams under user-specific token budgets, \method further combines a hypothesized concept generator with a cost-effective evidence retrieval module.
Experiments on real-world arXiv datasets show that \method outperforms state-of-the-art baselines, improving Node F1 and Edge F1 by 11.0\% and 13.1\% over the strongest baseline while requiring only 9.6\% of its runtime and 12.7\% of its monetary cost.
\end{abstract}

\begin{CCSXML}
<ccs2012>
   <concept>
       <concept_id>10002951.10002952.10002971</concept_id>
       <concept_desc>Information systems~Data structures</concept_desc>
       <concept_significance>500</concept_significance>
       </concept>
   <concept>
       <concept_id>10002951.10002952.10003219</concept_id>
       <concept_desc>Information systems~Information integration</concept_desc>
       <concept_significance>500</concept_significance>
       </concept>
 </ccs2012>
\end{CCSXML}

\ccsdesc[500]{Information systems~Data structures}
\ccsdesc[500]{Information systems~Information integration}

\keywords{Taxonomy Maintenance, Scholarly Data Management, Semantic Index}


\received{17 January 2026}
\received[revised]{19 May 2026}
\received[accepted]{12 June 2026}
\maketitle

\section{Introduction}
\begin{figure}
    \centering
    \includegraphics[width=0.75\linewidth]{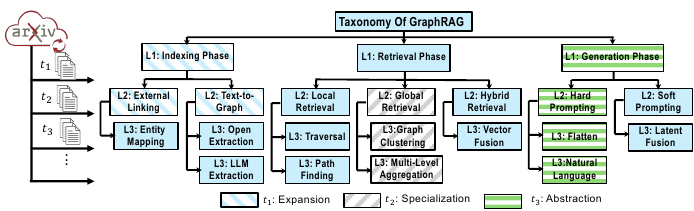}
    \caption{The rapid development of research topic \textit{GraphRAG}}
    \label{fig:problem}
\end{figure}

Taxonomies provide structured abstractions of domain knowledge by organizing concepts through hierarchical ``is-a'' relationships. They have long served as foundational infrastructure in data management, supporting tasks such as knowledge base construction~\cite{Suchanek2008YAGO,Auer2007DBpedia}, semantic indexing and query reformulation~\cite{Fontoura2008Relaxation,Ding2012OptimizingIndex}, Text-to-SQL translation~\cite{sen2020athena++,saha2016athena}, and product catalog management~\cite{Kanagal2012Supercharging}.

In this work, we focus on topic-centric taxonomies over scholarly data—fine-grained hierarchies tailored to a specific research topic—that organize knowledge from high-level problems and methodological families down to concrete design choices. 
A critical characteristic of scholarly taxonomies is their \textit{volatility}:  
Unlike taxonomies in general-purpose knowledge bases, whose conceptual hierarchies evolve relatively gradually, the scientific landscape for certain research topics evolves rapidly, as illustrated in Example~\ref{ex:motivating_example}. 
Consequently, existing taxonomies can quickly become obsolete, potentially impairing the key scholarly data management workflows:
(1) \textit{Reviewer Assignment}: Outdated classification schemes fail to capture emerging sub-fields, leading to  misclassified papers and mismatched reviewer assignments~\cite{acmccs2012}; 
(2) \textit{Information Retrieval}: Digital libraries relying on static indices, such as ACM Digital Library~\cite{acmdl} and IEEE Xplore~\cite{ieeexplore}, cannot effectively support semantic navigation for newly coined terminology, creating gaps between user queries and relevant literature;
(3) \textit{Conference Organization}~\cite{chilton2014frenzy,tanaka2002granulation}: conference programs fail to structurally reflect the shifting distribution of research interests, resulting in coarse-grained sessions.

\begin{example}
\label{ex:motivating_example}
Consider \textit{Graph Retrieval-Augmented Generation} (GraphRAG) in Fig.~\ref{fig:problem}, a rapidly evolving domain where researchers customize LLMs by organizing external knowledge as graphs, retrieving graph-structured evidence, and integrating it into generation. 
Since Microsoft introduced the initial GraphRAG method in 2024~\cite{edge2024local}, the field has expanded rapidly. 
As of January 2026, over 730 papers on arXiv leverage or discuss this paradigm. 
As the literature evolves, the taxonomy may undergo the following structural refinements at different timestamps:
(1) \textbf{expansion} (at $t_1$), which introduces a new node under an existing non-leaf category (e.g., adding \textit{External Linking} as a sub-category under \textit{Indexing Phase}, parallel to \textit{Text-to-Graph});
(2) \textbf{specialization} (at $t_2$), which splits a coarse concept into finer-grained variants (e.g., dividing \textit{Global Retrieval} into \textit{Graph Clustering} and \textit{Multi-Level Aggregation});
and (3) \textbf{abstraction} (at $t_3$), which adds an intermediate high-level parent to unify multiple existing paradigms (e.g., adding \textit{Hard Prompting} under the original parent \textit{Generation Phase} to unify \textit{Flatten} and \textit{Natural Language}).
\end{example}

This motivates the problem of  maintaining \emph{topic-centric} taxonomies over a large,  evolving scholarly repository (e.g., arXiv). 
Given a target research topic and an evolving scholarly repository that reflects the shifting research landscape, the goal is to automatically construct and incrementally adapt a taxonomy, ensuring it evolves alongside the underlying literature.

Most existing scholarly taxonomy construction methods~\cite{lahiri2025taxoalign,kargupta2025taxoadapt} are  designed for small, static corpora and cannot scale to large, dynamic settings due to three key limitations: (1) They heavily rely on LLMs not only to extract salient signals but also to reason about hierarchical structure. 
Consequently, they are vulnerable to hallucination: without reliable textual cues, LLM-inferred hierarchies often contain plausible yet unsupported concepts or relations.
(2) They lack native support for continual updates. 
With new arrivals, one must either re-infer the taxonomy from scratch over all accumulated documents, or build a delta taxonomy from the new batch and merge it via entity resolution~\cite{tu2023unicorn}, which may fix naming inconsistencies but often fails to resolve relational conflicts, causing drift (e.g., cycles or duplicated branches). As evidenced in Sec.~\ref{sec:main_results}, reconstruction from scratch increases monetary and time costs by up to 3.9$\times$, while naive incremental update degrades performance by 14\%--18\%. In Example~\ref{ex:motivating_example}, a single research topic typically can involve hundreds or thousands of relevant papers, making \emph{full-corpus processing} over all relevant papers infeasible.

To guide our method design, we identify three key desiderata. 
\textbf{D1: Reliable structure induction.}
It should infer taxonomic concepts and relations from reliable signals (rather than unconstrained LLM reasoning) to reduce hallucination. 
\textbf{D2: Efficient incremental updates.}
Given the continuous arrival of new papers, it should support efficient incremental updates that incorporate emerging knowledge without expensive recomputation. 
\textbf{D3: Cost-effective evidence acquisition.}
Under a fixed budget (e.g., token budget), it should selectively process the most informative documents to maximize evidence utility per unit cost.
To meet the above desiderata, we develop \method, a \underline{G}eometric-\underline{I}nferred and \underline{S}elf-refining \underline{T}axonomy Maintenance framework (Fig.~\ref{fig:workflow}). Our contributions include:
\begin{itemize}[leftmargin=*,noitemsep,topsep=0pt]

\item 
To satisfy \textit{D1}, \method derives taxonomic concepts and relations from expert-curated textual cues by using LLMs primarily as \textit{structure extractors}: it distills topic-relevant partial taxonomies from the \textit{Related Work} (or equivalent) sections of retrieved papers where authors explicitly organize and categorize prior work, and integrates these high-precision yet partial structures into a unified global taxonomy, mitigating hallucinated nodes and edges. 
This \emph{taxonomy integration} is formulated in a geometric box-embedding space, where box containment encodes inductive bias for ``is-a'' relations. 
To bridge semantic and geometric spaces, \method learns a \emph{self-supervised bidirectional word--box mapping model} from the extracted partial taxonomies and applies reliability-aware filtering to  retain high-confidence signals. (Sec.~\ref{sec:box_embedding}--\ref{sec:bidirectional_mappingd_model})

\item 
With this mapping model, \method avoids redundant reprocessing of historical content by transforming only novel concepts from newly extracted partial taxonomies into geometric boxes. To support \textit{D2} and continually adapt the mapping model, we introduce a novelty-aware coreset selection strategy. This strategy updates the bidirectional mapping model using only a representative subset of historical signals alongside new evidence, enabling efficient assimilation of emerging concepts without full retraining. (Sec.~\ref{sec:incremental_training})

\item
To support \textit{D3}, \method introduces two tightly coupled modules:
(1) Instead of using only the input topic keywords or existing taxonomy concepts as search signals---which can be overly broad and often retrieve redundant evidence—we propose a \textit{hypothesized concept generation} module that predicts likely emerging concepts for the next iteration (Sec.~\ref{sec:missing_box}). 
(2) Given a user-specified token budget, we develop a \textit{cost-effective evidence retrieval} module that aims to select an optimal query set of hypothesized concepts and retrieve (and process) only the most informative newly arrived papers to extract the partial taxonomies.

\item We  build an \emph{incremental semantic index} atop \method to enhance downstream paper search accuracy. This index is also maintained incrementally—mapping only newly arrived papers to the evolving taxonomy—thereby avoiding costly full repository rescanning to rebuild the index. (Sec.~\ref{sec:incremental_index})

\item Experiments on the real-world \textit{arXiv} scholarly corpus across diverse topics show that \method consistently constructs taxonomies with higher coverage and structural quality than the baselines. 
In particular, \method achieves average relative improvements of 11.0\% and 13.1\% in \nodefone and \edgefone over the strongest baseline, while requiring only 9.6\% of its runtime and 12.7\% of its monetary cost; the incremental update further delivers a 3.1$\times$ speedup over full retraining.

\end{itemize}

\section{Problem  and Solution Overview}
\label{sec:problem formulation and solution overview}

\begin{figure}
    \centering
    \includegraphics[width=0.7\linewidth]{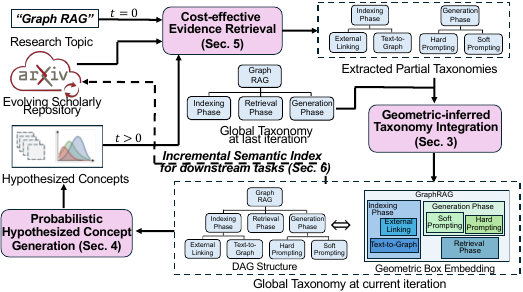}
    \caption{The workflow of the proposed \method}
    \label{fig:workflow}
\end{figure}

\subsection{Problem Formulation}\label{sec:problem definition}
\noindent \textbf{Concept and Taxonomy.} A scholarly taxonomy is a rooted directed acyclic graph (DAG) $\taxo =(\node,\edge)$, where each node $v\in \node$ is a \textit{concept}---a named semantic \emph{category} used to classify documents within the target topic, represented by a textual label. 
Each directed edge $(p,q)\in \edge$ encodes a hypernym--hyponym (parent--child) relation, meaning that $q$ is a sub-category of $p$. 
The taxonomy is rooted at a single concept $r$ representing the overarching research topic. 

\noindent \textbf{Scholarly Data Repository.} The scholarly data repository is an evolving set of documents, e.g., research papers, denoted at time $t$ as $\repo_t = \{d_1, d_2, \dots, d_{|\repo_t|}\}$. 
Each document $d_i$ comprises textual content (e.g., title, abstract, and full text).
Crucially, we adopt an \emph{open-world} setting (e.g., arXiv), where uncurated and continuously incoming documents yield a repository that mixes topic-relevant documents with off-topic or otherwise irrelevant documents.

The continuous evolution of the repository necessitates \emph{dynamic structural refinement} of the taxonomy:

\noindent \textbf{Structural Refinement.} We classify structural refinement into three types based on how a new concept relates to the current hierarchy:
\begin{itemize}[leftmargin=*,noitemsep,topsep=0pt]
\item \textbf{Expansion Refinement} occurs when a novel paradigm emerges at the same hierarchical level as existing categories under a common parent. It is realized by adding a new \textit{sibling node} to form a parallel (sub)-branch.
\item \textbf{Specialization Refinement} occurs when an existing leaf node becomes too coarse as the literature grows. It is realized by introducing finer-grained \textit{child nodes} to increase granularity.
\item \textbf{Abstraction Refinement} occurs when a set of sibling nodes naturally suggests a higher-level shared category. It is realized by inserting an \textit{intermediate node} to group these siblings.
\end{itemize}

\smallskip
Based on the preliminaries, we define our problem as follows:

\begin{definition}[\problem]
\label{def:problem}
Given an input research topic $r$ and an  evolving repository $\repo_t$, the problem of \problem is twofold:
\begin{itemize}[leftmargin=*,noitemsep,topsep=0pt]
\item \textbf{Initial Construction ($t=0$):} Construct an initial taxonomy $\taxo_0$ rooted at $r$ using the initial corpus snapshot $\repo_0$, establishing the baseline semantic structure of the domain.
\item \textbf{Continuous Evolution ($t>0$):} For each subsequent timestamp, refine $\taxo_t$ in light of newly arriving scientific documents $\Delta\repo_{t+1}=\repo_{t+1}\setminus \repo_t$ by applying structural refinements (e.g., expansion, specialization, and abstraction), yielding an updated taxonomy $\taxo_{t+1}$ that remains consistent with the evolving $\repo_{t+1}$.
\end{itemize}
\end{definition}

\subsection{Solution Overview}\label{sec:solution overview}

To address this challenge, we propose \method, a \underline{G}eometric-\underline{I}nferred and \underline{S}elf-refining \underline{T}axonomy Maintenance framework (Fig.~\ref{fig:workflow}) that operates in a continuous three-module cycle:

\begin{itemize}[leftmargin=*, noitemsep,topsep=0pt]

\item \textbf{Cost-effective Evidence Retrieval.}
Guided by the input research topic for initialization ($t=0$) and, for subsequent iterations ($t>0$), by the hypothesized concepts generated by the \textit{Probabilistic Hypothesized Concept Generation} module, it employs a novel \emph{utility-driven, budget-constrained planner} to maximize evidence utility under a user-specified token budget (and corresponding monetary cost): it selects (i) which hypotheses to instantiate as retrieval queries and (ii) which retrieved papers to parse, and then applies an LLM-guided extractor to distill partial taxonomies from their \emph{Related Work} (or equivalent) sections. (Sec.~\ref{sec:paper search}).

\item \textbf{Geometric-inferred Taxonomy Integration.}
This module integrates newly extracted partial taxonomies with the global taxonomy from iteration $t-1$, yielding the updated taxonomy at iteration $t$.
We propose a bidirectional word--box mapping model that embeds concepts in a \textit{geometric box space} and leverages containment relationships to guide robust taxonomy integration.
The model is further designed for reliable and efficient self-supervised training via a dual-objective data selection strategy. (Sec.~\ref{sec:model_and_training})

\item \textbf{Probabilistic Hypothesized Concept Generation.}
This module forecasts likely \emph{structural refinement directions} (e.g., expansion, specialization, and abstraction) in the current taxonomy $\taxo_t$ and generates corresponding \emph{hypothesized concepts} as query candidates for the next iteration.
We instantiate this step with a geometric concept emergence model that leverages the inductive bias of box embeddings: unoccupied regions (geometric ``gaps'') within a parent concept’s embedding box indicate latent space where emerging concepts are likely to reside.
(Sec.~\ref{sec:missing_box})
\end{itemize}

\noindent\textit{Remark:} While we instantiate \method{} for evolving scholarly data, the framework is broadly applicable to other evolving hierarchical knowledge structures whose schema itself drifts over time, including product catalogs, biomedical concept hierarchies, and domain-specific knowledge bases. Adapting to a new domain requires changes only at the evidence-extraction stage: the regex section detector and Related-Work prompt would be replaced by a domain-specific extractor (e.g., a category-attribute parser for e-commerce catalogs, or an ontology-aligned span extractor for biomedical texts). The geometric integration, hypothesized concept generation, and budget-aware retrieval modules are domain-agnostic and require no modification.

\section{Geometric-inferred Taxonomy Integration}
\label{sec:model_and_training}
Merging partial taxonomies with the global taxonomy  presents challenges beyond simple name resolution. 
Entity Resolution methods~\cite{huang2023er1,galhotra2018er2} alone cannot resolve structural conflicts—such as granularity mismatches or cycles—often resulting in structural incoherence. Consequently, we formulate this integration problem within a geometric box-embedding space, utilizing geometric containment as a strict inductive bias for hierarchy construction (Sec.~\ref{sec:box_embedding}).
To bridge the semantic and geometric spaces, we propose a bidirectional word--box mapping model (Sec.~\ref{sec:bidirectional_mappingd_model}).
Crucially, to address the challenges of continual learning and noisy training signals, we employ a self-supervised dual-objective training scheme that selectively filters high-quality signals for robustness while retaining a representative historical set for efficient incremental updates (Sec.~\ref{sec:incremental_training}).

\subsection{Box Representations for Taxonomy Structures}\label{sec:box_embedding}
In the geometric box-embedding space~\cite{Lu2024BoxTM,Xue2024TaxBox}, each concept $v \in \node$ is encoded as an axis-aligned hyper-rectangle (hereafter referred to as a \textit{box}) in a $d$-dimensional Euclidean space, as illustrated in the right panel of Fig.~\ref{fig:bidirectional_mapping_model}. Formally, we adopt a center-offset parameterization:
$$
B_v = B(\mathbf{c}_v, \mathbf{b}_v)
= \left\{\, 
\mathbf{x} \in \mathbb{R}^d \;\middle|\; 
\bigl| x^{(k)} - c_v^{(k)} \bigr| \le b_v^{(k)},\;\forall k \in \{1,\ldots,d\}
\right\}.
$$
Here, $\mathbf{c}_v \in \mathbb{R}^d$ denotes the \emph{center} of the box, and 
$\mathbf{b}_v \in \mathbb{R}^d_{\ge 0}$ is the non-negative \emph{offset} vector 
determining its spatial extent along each dimension. $d$ is the dimensionality of the box-embedding space, and $k$ indexes each coordinate axis from $1$ to $d$.

In the geometric space, hypernym-hyponym relations are encoded via containment constraints. Specifically, for any directed edge $(p, q) \in \edge$, the parent concept $p$ must spatially \textit{contain} its child concept $q$ (i.e., $B_p \supseteq B_q$). 
For example, in Fig.~\ref{fig:workflow}, the box corresponding to \textit{Generation Phase} must enclose those of its child concepts, including \textit{Hard Prompting} and \textit{Soft Prompting}.

\subsection{Bidirectional Word--Box Mapping Model}\label{sec:bidirectional_mappingd_model}
\subsubsection{Model Architecture}
In order to bridge the semantic space and geometric (box embedding) space, we propose a bidirectional word--box mapping model that jointly learns two complementary transformations, as illustrated in Fig.~\ref{fig:bidirectional_mapping_model}:

\begin{itemize}[leftmargin=*,noitemsep,topsep=0pt]
\item \textbf{Word$\to$Box (\wtob):} Given a concept $v$, a frozen pretrained encoder (e.g., BERT~\cite{devlin2019bert}) first converts it into a word embedding $\mathbf{w}$. To capture the non-linear transformation from the semantic space to the geometric space, we employ a multilayer perceptron (MLP) that maps $\mathbf{w}$ to a joint latent representation of dimension $2d$. We split this output to obtain the center $\mathbf{c}$ and an offset vector $\mathbf{b}$:
$
[\mathbf{c}; \mathbf{b}] = \text{MLP}_{\text{enc}}(\mathbf{w})
$
where $[\cdot ; \cdot]$ denotes vector concatenation.

\item \textbf{Box$\to$Word (\btow):}
Similarly, a second MLP implements the reverse transformation by projecting a box embedding $B(\mathbf{c}, \mathbf{b})$ back into a continuous query embedding in the semantic space. We concatenate the center and offset vectors to form a unified geometric feature, which is then mapped to a semantic vector $\mathbf{w}$ via a decoder MLP:
$
\mathbf{w} = \text{MLP}_{\text{dec}}([\mathbf{c}; \mathbf{b}]).
$
\end{itemize}

\begin{figure}[t]
    \centering
    \includegraphics[width=0.6\linewidth]{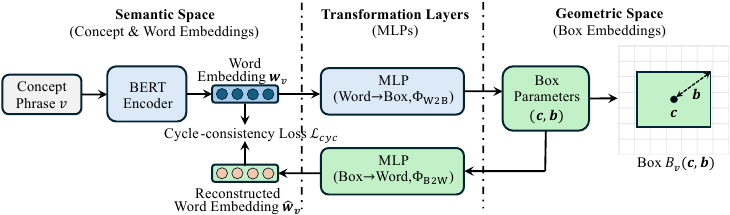}
    \caption{Bidirectional Word--Box Mapping Model.}
    \label{fig:bidirectional_mapping_model}
\end{figure}

\subsubsection{Training Objectives}
The model is trained using self-supervised signals derived from the extracted parent-child relations. We optimize a joint objective consisting of three loss functions:

\begin{itemize}[leftmargin=*,noitemsep,topsep=0pt]
\item \textbf{Geometric Containment Loss $\mathcal{L}_{\text{cont}}$.}
To enforce hierarchy, we require that the parent box $B_p$ spatially encapsulates its child $B_q$. We quantify this by measuring the volume of the child covered by the parent:
$\operatorname{cov}(B_p, B_q)
= \frac{\vol(B_p \cap B_q)}{\vol(B_q)}
\in [0,1],$
where $\vol(B) = \prod_{k=1}^{d} 2b^{(k)}$. 
The loss is minimized when containment is complete:
$
\mathcal{L}_{\text{cont}}(p,q)
= 1 - \operatorname{cov}(B_p,B_q).
$

\item \textbf{Cycle-Consistency Loss $\mathcal{L}_{\text{cycle}}$.}
Since we lack ground-truth geometric coordinates to supervise the inverse \btow mapping directly, we adopt a self-supervised cycle-consistency approach~\cite{zhu2017cycle1}. We penalize the reconstruction error between the original word embedding $\mathbf{w}$ and its round-trip projection through the box space:
$
\mathcal{L}_{\text{cycle}}(v)
= \frac{1}{2}
\Bigl(
1 - \cos\bigl(
\Phi_{\text{B2W}}(\Phi_{\text{W2B}}(\mathbf{w})),
\mathbf{w}
\bigr)
\Bigr).
$
This ensures that the learned geometric representations retain high semantic fidelity to the input space.
\item \textbf{Volume Regularization Loss $\mathcal{L}_{\text{vol}}$.}
To avoid degenerate solutions where box volumes collapse to zero (leading to trivial containment), we enforce a minimum-volume constraint by penalizing boxes whose volume falls below a threshold $V_{\min}$:
$
\mathcal{L}_{\text{vol}}
=
\frac{1}{|\mathcal{V}|}
\sum_{v \in \mathcal{V}}
\max\!\,\Bigl(0, V_{\min} - \vol(B_v)\Bigr).
$
\end{itemize}

\smallskip
\noindent
\textit{Overall Objective.}
The total loss is a weighted sum:
$
\mathcal{L}
=
\mathcal{L}_{\text{cont}}
+
\mathcal{L}_{\text{cycle}}
+
\lambda_{\text{vol}} \mathcal{L}_{\text{vol}},
$
where $\lambda_{\text{vol}}$ balances the regularization strength against the primary structural and semantic objectives.

\subsection{Robust and Efficient Incremental Training}
\label{sec:incremental_training}
To optimize these objectives without manual annotation, we derive self-supervised signals from parent–child relations in the extracted partial taxonomies. 
However, the continuous influx of papers introduces a \textit{learning dilemma}: Updating the model is necessary to capture structural changes, but doing so without historical data can cause  \emph{catastrophic forgetting}~\cite{kirkpatrick2017catastrophic}; in contrast, retraining on the entire dataset is computationally infeasible. 
This process is further complicated by inherent noise in the self-supervision signals, stemming from three sources: (1) \textit{Subjectivity}, arising from divergent author perspectives and terminology; (2) \textit{Temporal Drift}, where the ground-truth structure itself evolves (e.g., the emergence of new concepts); (3) \textit{Extraction Errors}, caused by inevitable inaccuracies in LLM processing. 
Consequently, relying solely on direct supervision from raw extracted pairs is prone to inconsistency and error accumulation.

To address these issues, \method introduces a \emph{dual-objective data selection strategy} designed for efficient and robust incremental training. This strategy selects two types of data:
1) \textbf{New data} ($\mathcal{D}_{\text{new}}$). A \emph{Reliability-Aware Graph Selection} retains only the most reliable relation pairs, thereby reducing noise in the supervision signal.
2) \textbf{Historical data} ($\mathcal{D}_{\text{old}}$). A \emph{novelty-aware coreset selection} mechanism selects a compact yet representative subset of historical training instances, preserving accumulated knowledge without incurring the cost of retraining from scratch.
\subsubsection{Reliability-Aware Graph Selection ($\mathcal{D}_{\text{new}}$).}
To distill high-quality signals from $\mathcal{D}_{\text{new}}$, we construct a weighted candidate graph from extracted partial taxonomies and then select a subset of high-support relations that ensures broad concept participation.

\noindent\textbf{Graph construction from partial taxonomies.}
To mitigate terminological heterogeneity, we employ Entity Resolution (ER)~\cite{christen2012datamatching} to normalize diverse concept names into canonical identifiers. This process consolidates synonyms, such as \textit{Cardinality Estimation} and \textit{Selectivity Estimation}, via a blocking-and-matching pipeline, instantiated using either deep learning models~\cite{tu2023unicorn,li2020deep} or lightweight LLMs~\cite{llama3}.
We then aggregate the normalized parent--child relations into a weighted \textit{candidate concept graph} $\mathcal{G}_{\text{cand}} = (\mathcal{V}_{\text{cand}}, \mathcal{E}_{\text{cand}})$, where $\mathcal{V}_{\text{cand}}$ denotes the set of unique candidate concepts.
For each directed edge $(u,v)\in \mathcal{E}_{\text{cand}}$, we assign a support weight $w_{uv}$ equal to the number of unique source papers that contain this relation. Larger $w_{uv}$ indicates stronger corroboration across independent evidence sources and is used as a proxy for relation reliability.

\noindent\textbf{Coverage-Reliability Selection via Submodular Maximization.}
Given the set of candidate relations $\mathcal{E}_{\text{cand}}$, we aim to select a subset $\mathcal{D}_{\text{rel}} \subseteq \mathcal{E}_{\text{cand}}$ to serve as reliable supervision signals. This selection is constrained by a cardinality budget and must balance two objectives: (i) \textit{Coverage}, ensuring broad semantic representation by maximizing the number of unique participating concepts; and (ii) \textit{Reliability}, prioritizing relations with strong support weights.
We formulate this trade-off as a constrained maximization problem. 
Let $\text{Cov}(\mathcal{D}_{\text{rel}})$ denote the set of unique concepts covered by the selected relations. 
We define the normalized coverage score as $f_{\text{cov}}(\mathcal{D}_{\text{rel}}) = |\text{Cov}(\mathcal{D}_{\text{rel}})| / |\mathcal{V}_{\text{cand}}|$.
Then, we quantify reliability using log-normalized edge weights to mitigate the effect of heavy-tailed distributions. 
Defining the normalized weight $\tilde{w}_{uv} = \log(1+w_{uv})/\log(1+w_{\max})$, the reliability score is given by $f_{\text{rel}}(\mathcal{D}_{\text{rel}}) = \frac{1}{K_{\mathrm{rel}}}\sum_{(u,v)\in\mathcal{D}_{\text{rel}}}\tilde{w}_{uv}$.
The overall optimization problem is:
\begin{equation}\label{eq:cover_rel}
\begin{aligned}
\max_{\mathcal{D}_{\text{rel}}\subseteq \mathcal{E}_{\text{cand}}}\quad 
& F(\mathcal{D}_{\text{rel}})=f_{\text{cov}}(\mathcal{D}_{\text{rel}})+\lambda_{\text{rel}}\,f_{\text{rel}}(\mathcal{D}_{\text{rel}}) \\
\text{s.t.}\quad 
& |\mathcal{D}_{\text{rel}}|\le K_{\mathrm{rel}}.
\end{aligned}
\end{equation}

where $K_{\mathrm{rel}}=\lfloor \eta_{\text{rel}}|\mathcal{D}_{\text{new}}|\rfloor$ is the cardinality budget determined by the retention fraction $\eta_{\text{rel}}$, and $\lambda_{\text{rel}}$ is a hyperparameter balancing the two terms.

\noindent\textbf{Greedy Optimization and Guarantee.} 
The objective function $F$ is non-negative, monotone, and submodular, as it combines a coverage term exhibiting diminishing marginal returns with a strictly modular (linear) reliability term. (We provide a formal proof in Appendix~\ref{sec:proof_submodularity_f1}.) Leveraging these properties, we employ a standard greedy algorithm that iteratively selects the edge $e^*$ maximizing the marginal gain $\Delta F(e \mid \mathcal{D}_{\text{rel}})$, which is computationally efficient and theoretically grounded, providing a $(1-1/e)$-approximation guarantee to the optimal solution~\cite{nemhauser1978analysis}.

\noindent\textit{Remark.} Reliability-aware filtering grounds GIST's robustness in \emph{cross-paper consensus} rather than any single extraction signal, operating along two complementary axes. \emph{Synchronically}, key concepts and relations within a domain are typically attested by multiple papers, so missing detections (e.g., papers without an explicit Related Work section) and noisy extractions are absorbed by corroborating evidence: uncorroborated relations are downweighted through the support weights $w_{uv}$, while well-supported structures are preserved. \emph{Diachronically}, the same principle yields a \emph{self-correcting mechanism} for niche subfields, where low-support branches may reflect either idiosyncratic framing by a single author or the genuine emergence of a novel domain. The continuous update mechanism resolves this ambiguity \emph{a posteriori}: branches the community adopts accumulate cross-paper support and their $w_{uv}$ grows over time, ensuring survival under selection, while unreinforced branches remain low-weight and are gradually displaced. The filter therefore neither commits prematurely to any single author's framing nor over-penalizes emerging structure, adapting instead to the empirical distribution of cross-paper agreement as the repository grows.

\subsubsection{Incremental Model Update via Novelty-Aware Coreset Selection ($\mathcal{D}_{old}$).}
\label{sec:coreset_update}
To enable continual refinement of the dual mapping model without full retraining, we draw inspiration from prior research on coreset selection~\cite{deng2025two,wang2022coresets,chai2023goodcore}. At each iteration, the full historical dataset $\mathcal{D}_{\text{old}}$ is approximated by a compact, weighted \emph{coreset} $\mathcal{D}_\text{core}$, defined as a subset of size $|\mathcal{D}_\text{core}| = \eta_{\text{core}}|\mathcal{D}_\text{old}|$, where $\eta_{\text{core}} \in (0,1]$ is a hyperparameter controlling the compression ratio. This coreset is then utilized for subsequent training.

\noindent\textbf{Primary Goal: Loss Preservation.} 
The primary goal of the coreset adheres to the principles of \textit{Empirical Risk (Loss) Minimization (ERM)}~\cite{mohri2018foundations}, which aims to select $\mathcal{D}_{\text{core}}$ with weights $\{w'_e\}_{e \in \mathcal{D}_{\text{core}}}$ such that its weighted empirical risk $\tilde{\mathcal{L}}_{\text{old}}(\theta)=\sum_{e \in \mathcal{D}_{\text{core}}} w'_e \ell(e; \theta) $  closely approximates the full historical risk $\mathcal{L}_{\text{old}}(\theta)$:
\begin{equation}
\min_{\mathcal{D}_{\text{core}} \subset \mathcal{D}_{\text{old}}, \{w'_e\}} \left| \mathcal{L}_{\text{old}}(\theta) - \sum_{e \in \mathcal{D}_{\text{core}}} w'_e \, \ell(e; \theta) \right|
\end{equation}
where $w'_e \ge 0$ (with $\sum w'_e = 1$) denotes the importance weight assigned to each selected instance to mirror the full-data objective.
The key issue is thus to choose an
optimal sampling distribution over $\mathcal{D}_{\text{old}}$ from which to
construct $\mathcal{D}_{\text{core}}$ and assign the weights
$\{w'_e\}$.

\noindent\textbf{Complementary Goal: Novelty for Dynamic Data.}
Traditional coreset selection assumes a static dataset, ignoring potential redundancy between the historical coreset and incoming data $\mathcal{D}_{\text{new}}$. We address this by enforcing a \textit{novelty} constraint: \emph{retained historical instances must be complementary to the new batch}.
To quantify this, we define a novelty score $n(e)$ for each historical edge $e=(p, q) \in \mathcal{D}_{\text{old}}$. First, we represent each edge as the concatenation of its parent and child word embeddings ($\mathbf{w}_p, \mathbf{w}_q$):
$
\mathbf{r}(e) = [\mathbf{w}_p; \mathbf{w}_q].
$
We then measure the maximum similarity of $e$ to the incoming batch $\mathcal{D}_{\text{new}}$:
$
s(e) = \max_{e' \in \mathcal{D}_{\text{new}}} \cos\left( \mathbf{r}(e),\;\mathbf{r}(e') \right).
$
The final novelty score is $n(e) = 1 - s(e)$, which prioritizes historical instances that are semantically distinct from the new data.

\noindent\textbf{Novelty-Aware Coreset Selection.} Based on the above idea, we formulate coreset selection as an optimization problem to find a sampling distribution that balances fidelity to the historical risk ($\Pi_{\text{ERM}}$) with novelty relative to the new data. 
\begin{equation}
\label{eq:nov-kl}
\max_{\Pi_{\text{core}} \in \Delta(\mathcal{D}_{\text{old}})}
-\text{KL}\!\left(\Pi_{\text{core}} \,\|\, \Pi_{\text{ERM}}\right)
\;+\;
\lambda_{\mathrm{nov}}\,\mathbb{E}_{e \sim \Pi_{\text{core}}}\!\left[n(e)\right],
\end{equation}
where $\Delta(\mathcal{D}_{\text{old}}) = \{\Pi_{\text{core}} : \mathcal{D}_{\text{old}} \rightarrow [0,1] \mid \sum_{e} \Pi_{\text{core}}(e) = 1\}$ denotes the set of all valid probability distributions over the $\mathcal{D}_{\mathrm{old}}$. Here, $\Pi_{\text{ERM}}$ represents the loss distribution at the previous timestamp, derived directly from the preceding training stage. $\text{KL}(\cdot \,\|\, \cdot)$ denotes the Kullback--Leibler divergence, and $\lambda_{\mathrm{nov}}$ is a hyperparameter controlling the trade-off between loss preservation and novelty.

To derive the optimal sampling distribution $\Pi_{\text{core}}$ and the corresponding importance weights $w'_e$, we establish Propositions~\ref{prop:coreset_sampling} and~\ref{prop:coreset_weights}. 
Detailed proofs are provided in Appendix~\ref{sec:proof_coreset_sampling} and~\ref{sec:proof_coreset_weights}.

\begin{proposition}[Optimal Sampling Distribution]
\label{prop:coreset_sampling}
The unique sampling distribution $\Pi_{\text{core}}^{\ast}$ that maximizes the novelty-aware objective in Eq.~\ref{eq:nov-kl} is given by
$
\Pi_{\text{core}}^{\ast}(e)
=
\frac{\Pi_{\text{ERM}}(e)\exp\{\lambda_{\mathrm{nov}} n(e)\}}
{\sum_{e'} \Pi_{\text{ERM}}(e')\exp\{\lambda_{\mathrm{nov}} n(e')\}}.
$
\end{proposition}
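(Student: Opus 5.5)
The plan is to recognize Eq.~\ref{eq:nov-kl} as a Gibbs variational problem and to prove optimality via the Gibbs inequality (nonnegativity of KL), which also yields uniqueness. Write $\Pi = \Pi_{\text{core}}$ and $Z=\sum_{e'}\Pi_{\text{ERM}}(e')\exp\{\lambda_{\mathrm{nov}} n(e')\}$. Since $n(e)\in[0,2]$ is bounded and $\Pi_{\text{ERM}}$ is a probability distribution, we have $0<Z<\infty$ as long as $\Pi_{\text{ERM}}$ is not identically zero. Hence the candidate $\Pi^\ast(e)=\Pi_{\text{ERM}}(e)e^{\lambda_{\mathrm{nov}} n(e)}/Z$ is a well-defined element of $\Delta(\mathcal{D}_{\text{old}})$, and it has the same support as $\Pi_{\text{ERM}}$.

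The key step is an algebraic identity. For any $\Pi$ with $\text{KL}(\Pi\|\Pi_{\text{ERM}})<\infty$ (i.e., $\Pi$ is absolutely continuous with respect to $\Pi_{\text{ERM}}$), expand the objective
\[
J(\Pi) = -\sum_e \Pi(e)\log\frac{\Pi(e)}{\Pi_{\text{ERM}}(e)} + \lambda_{\mathrm{nov}}\sum_e \Pi(e)n(e).
\]
Substituting $\Pi_{\text{ERM}}(e)e^{\lambda_{\mathrm{nov}} n(e)} = Z\,\Pi^\ast(e)$ inside the logarithm and using $\sum_e\Pi(e)=1$ gives
\[
J(\Pi)=\log Z-\text{KL}(\Pi\|\Pi^\ast).
\]
Any $\Pi$ that puts mass outside the support of $\Pi_{\text{ERM}}$ has $J(\Pi)=-\infty$ and can be discarded.

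Optimality and uniqueness then follow from the Gibbs inequality: $\text{KL}(\Pi\|\Pi^\ast)\ge 0$, with equality if and only if $\Pi=\Pi^\ast$. This can be proved via Jensen's inequality applied to the strictly concave $\log$, or cited directly. Therefore $J(\Pi)\le\log Z$, the maximum is attained exactly at $\Pi^\ast$, and the maximal value is $\log Z$.

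As a sanity check, or as an alternative derivation, one can take the Lagrangian $J(\Pi)-\mu(\sum_e\Pi(e)-1)$. Setting the derivative to zero gives $\log\frac{\Pi(e)}{\Pi_{\text{ERM}}(e)} = \lambda_{\mathrm{nov}} n(e) - 1 - \mu$, which recovers the exponential-tilting form. Strict concavity of $J$ (since $-\text{KL}$ is strictly concave and the novelty term is linear) then confirms that this stationary point is the unique global maximizer. I would present this Lagrangian route only as intuition, because nonnegativity constraints would need KKT handling. The identity argument avoids that entirely.

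I expect the only real obstacle to be the support and boundary issues: the convention $0\log 0=0$, excluding $\Pi$ that are not absolutely continuous with respect to $\Pi_{\text{ERM}}$, and making sure uniqueness is stated on $\Delta(\mathcal{D}_{\text{old}})$. The identity argument handles all of these cleanly once the finite-KL restriction is made explicit.
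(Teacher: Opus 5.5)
Your proof is correct, but it takes a different route from the paper's. The paper uses exactly the Lagrangian derivation you relegate to a sanity check: one multiplier for $\sum_e \Pi(e)=1$, a stationarity condition, then exponentiation and normalization. It gets uniqueness from strict concavity of $-\mathrm{KL}$ plus a linear term, and it handles edges with $\Pi_{\mathrm{ERM}}(e)=0$ by noting that the KL term becomes infinite there.

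Your caution about that route is justified. The paper simply asserts that $\Pi^\ast(e)>0$ on the support of $\Pi_{\mathrm{ERM}}$ (true, because the derivative of $-x\log x$ diverges at $0$, but never argued). It also takes the existence of a maximizer for granted.

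Your identity $J(\Pi)=\log Z-\mathrm{KL}(\Pi\,\|\,\Pi^\ast)$ is a global certificate of optimality. Combined with the Gibbs inequality, it gives existence, optimality, uniqueness on all of $\Delta(\mathcal{D}_{\mathrm{old}})$, and the optimal value $\log Z$ at once. It needs no differentiability, no boundary or KKT analysis, and no separate concavity argument.

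What the paper's route buys is constructiveness: it derives the exponential-tilting form rather than verifying a guessed one. It is also the template the paper reuses for the budget-constrained planner (Proposition~\ref{pro:exploration_policy}), where an additional inequality multiplier has to be introduced anyway.
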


\begin{proposition}[Optimal Coreset Weighting]
\label{prop:coreset_weights}
To ensure that the coreset risk $\tilde{\mathcal{L}}_{\mathrm{old}}(\theta)$ is a consistent estimator of the full historical risk $\mathcal{L}_{\mathrm{old}}(\theta)$, each instance $e \in \mathcal{D}_{\mathrm{core}}$ is assigned a normalized inverse-probability weight
$
w'_e \;=\; \frac{1 / \Pi_{\mathrm{core}}^{*}(e)}{\sum_{e_j \in \mathcal{D}_{\mathrm{core}}} 1 / \Pi_{\mathrm{core}}^{*}(e_j)} \, .
$
\end{proposition}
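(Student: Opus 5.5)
The plan is to read Proposition~\ref{prop:coreset_weights} as a statement about \emph{self-normalized importance sampling}. I take the full historical risk to be the uniform empirical average $\mathcal{L}_{\mathrm{old}}(\theta)=\frac{1}{N}\sum_{e\in\mathcal{D}_{\mathrm{old}}}\ell(e;\theta)$ with $N=|\mathcal{D}_{\mathrm{old}}|$. I model the coreset $\mathcal{D}_{\mathrm{core}}=\{e_1,\dots,e_m\}$, with $m=\eta_{\mathrm{core}}N$, as $m$ i.i.d.\ draws from the proposal $\Pi^{*}_{\mathrm{core}}$ of Proposition~\ref{prop:coreset_sampling}. The claim then becomes that
$$\tilde{\mathcal{L}}_{\mathrm{old}}(\theta)=\frac{\sum_{j}\ell(e_j;\theta)/\Pi^{*}_{\mathrm{core}}(e_j)}{\sum_{j}1/\Pi^{*}_{\mathrm{core}}(e_j)}$$
converges almost surely to $\mathcal{L}_{\mathrm{old}}(\theta)$ as $m\to\infty$, for each fixed $\theta$.

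The first step is to check that the weights are well defined. By Proposition~\ref{prop:coreset_sampling}, $\Pi^{*}_{\mathrm{core}}(e)\propto\Pi_{\mathrm{ERM}}(e)\exp\{\lambda_{\mathrm{nov}}n(e)\}$. The exponential factor is strictly positive and bounded, since $n(e)\in[0,2]$. So $\Pi^{*}_{\mathrm{core}}$ has full support on $\mathcal{D}_{\mathrm{old}}$ exactly when $\Pi_{\mathrm{ERM}}$ does.

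I expect this support condition to be the only real obstacle, because the rest is standard. I would impose it as an assumption, namely that $\Pi_{\mathrm{ERM}}(e)>0$ for every historical edge. This is natural because $\Pi_{\mathrm{ERM}}$ is a loss distribution from the previous training stage, and losses can be floored at a small $\epsilon>0$. Without this condition, edges with zero probability are never sampled, and the estimator targets only the restricted risk.

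The second step computes the expectations of the numerator and the denominator, each scaled by $1/m$:
$$\mathbb{E}_{e\sim\Pi^{*}}\!\left[\frac{\ell(e;\theta)}{\Pi^{*}(e)}\right]=\sum_{e}\ell(e;\theta)=N\,\mathcal{L}_{\mathrm{old}}(\theta),\qquad \mathbb{E}_{e\sim\Pi^{*}}\!\left[\frac{1}{\Pi^{*}(e)}\right]=N.$$
Both summands are bounded random variables, because $\mathcal{D}_{\mathrm{old}}$ is finite and $\Pi^{*}$ is bounded away from zero. The strong law of large numbers therefore applies to each average separately.

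The third step uses the continuous mapping theorem on the ratio map $(a,b)\mapsto a/b$, which is continuous at $b=N>0$. This gives $\tilde{\mathcal{L}}_{\mathrm{old}}(\theta)\to N\mathcal{L}_{\mathrm{old}}/N=\mathcal{L}_{\mathrm{old}}(\theta)$ almost surely. The normalization $\sum_e w'_e=1$ required earlier in the paper holds by construction.

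I would also remark that the self-normalized estimator is consistent but only asymptotically unbiased, with bias $O(1/m)$ by a delta-method expansion. If the unnormalized weights $1/(mN\Pi^{*}(e))$ were used instead, the estimator would be exactly unbiased. Finally, if the coreset is drawn without replacement, I would replace $\Pi^{*}$ by the inclusion probabilities and give the analogous Horvitz--Thompson argument.
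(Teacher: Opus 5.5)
Your proof is correct and uses the same argument as the paper's. You write $\tilde{\mathcal{L}}_{\mathrm{old}}(\theta)$ as a ratio of two sample means under $\Pi^{*}_{\mathrm{core}}$, apply the strong law to each, and pass to the ratio. The one substantive difference is the target risk.

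The paper's proof uses the importance ratio $\Pi_{\mathrm{ERM}}(e)/\Pi^{*}_{\mathrm{core}}(e)$ and identifies $\mathcal{L}_{\mathrm{old}}(\theta)$ with $\sum_e \Pi_{\mathrm{ERM}}(e)\ell(e;\theta)$, so its denominator has mean $\sum_e \Pi_{\mathrm{ERM}}(e)=1$. Strictly, that certifies self-normalized weights proportional to $\Pi_{\mathrm{ERM}}(e)/\Pi^{*}_{\mathrm{core}}(e)\propto\exp\{-\lambda_{\mathrm{nov}}n(e)\}$, by Proposition~\ref{prop:coreset_sampling}. These agree with the stated $1/\Pi^{*}_{\mathrm{core}}(e)$ only when $\Pi_{\mathrm{ERM}}$ is uniform.

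You instead read $\mathcal{L}_{\mathrm{old}}$ as the uniform empirical risk, with means $N\mathcal{L}_{\mathrm{old}}(\theta)$ and $N$. That proves the weights exactly as the proposition writes them. It also fits the paper's description of $\Pi_{\mathrm{ERM}}$ as a loss distribution from the previous stage, not as the risk being estimated.

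Your reading does cost the positivity hypothesis you isolate. In the paper's $\Pi_{\mathrm{ERM}}$-weighted version, edges with $\Pi_{\mathrm{ERM}}(e)=0$ carry no target mass, so support comes for free. For the uniform target it is genuinely required. Your boundedness argument, from finiteness of $\mathcal{D}_{\mathrm{old}}$ plus positivity, replaces the paper's appeal to an ``integrability condition'' that is never actually stated.

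The paper also proves, as its part (i), exact unbiasedness of the unnormalized estimator, which you only record as a remark. That is harmless, since the proposition claims only consistency.

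Both arguments share one imprecision. You must decouple $m$ from $\eta_{\mathrm{core}}|\mathcal{D}_{\mathrm{old}}|$ and let $m\to\infty$ with $\mathcal{D}_{\mathrm{old}}$ fixed, sampling with replacement. That is the only coherent sense of ``consistent'' here.
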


\subsection{Taxonomy Integration}
\label{sec:taxonomy_integration}

As presented in Alg.~\ref{alg:taxonomy_integration}, we construct the unified global taxonomy $\mathcal{T}_t$ by merging concepts from the previous state $\mathcal{T}_{t-1}$ with novel concepts from the partial taxonomies at timestamp $t$. 
Specifically, we form a unified concept set $\mathcal{V}_t=\mathcal{V}(\mathcal{T}_{t-1}) \cup \bigcup_{i=1}^{n}\mathcal{V}(\mathcal{T}_{par,i})$ and project each concept $v\in\mathcal{V}_t$ into the box-embedding space via $B_v=\Phi_{\text{W2B}}(\mathbf{w}_v)$ (Lines~1--2). 
We then infer candidate hypernym--hyponym relations by evaluating geometric containment for each ordered pair of concepts (Lines~4--9). 
Concretely, we compute the containment score $s_{u\rightarrow v}=\operatorname{cov}(B_u,B_v)$ and introduce a candidate edge $u\rightarrow v$ only if $s_{u\rightarrow v}\ge\tau$ (Lines~5--7), where $\tau$ is a predefined threshold (set to $0.9$ by default).
To ensure that the candidate relation set is acyclic under approximate containment scores, we retain only edges directed from larger to smaller boxes, i.e., $\mathrm{Vol}(B_u)>\mathrm{Vol}(B_v)$ (Line~6). Finally, since containment is transitive (e.g., $B_c \subset B_b \subset B_a$), we apply transitive reduction to remove redundant indirect relations and obtain a minimal, consistent hierarchy $\mathcal{T}_t$ (Line~10).

\begin{algorithm}[t]
\caption{Taxonomy Integration}
\label{alg:taxonomy_integration}
\footnotesize
\begin{algorithmic}[1]
\REQUIRE Previous global taxonomy $\mathcal{T}_{t-1}$; partial taxonomies $\{\mathcal{T}_{par,1},\dots,\mathcal{T}_{par,n}\}$; $\Phi_{\text{W2B}}$; containment threshold $\tau$.
\ENSURE Updated global taxonomy $\mathcal{T}_t$.
\STATE Construct the unified concept set $\mathcal{V}_t \leftarrow \mathcal{V}(\mathcal{T}_{t-1}) \cup \bigcup_{i=1}^{n}\mathcal{V}(\mathcal{T}_{par,i})$.
\STATE For each $v \in \mathcal{V}_t$, compute its box embedding $B_v \leftarrow \Phi_{\text{W2B}}(\mathbf{w}_v)$.
\STATE Initialize the candidate edge set $\mathcal{E}_{\text{cand}} \leftarrow \emptyset$.
\FOR{each ordered pair $(u, v) \in \mathcal{V}_t \times \mathcal{V}_t$, $u \neq v$}
    \STATE Compute the geometric containment score $s_{u \rightarrow v} \leftarrow \operatorname{cov}(B_u, B_v)$.
    \IF{$s_{u \rightarrow v} \ge \tau$ \textbf{and} $\mathrm{Vol}(B_u) > \mathrm{Vol}(B_v)$}
        \STATE Add $(u, v)$ to $\mathcal{E}_{\text{cand}}$.
    \ENDIF
\ENDFOR
\STATE Apply transitive reduction to obtain a minimal edge set $\mathcal{E}_{\text{final}} \leftarrow \operatorname{TransitiveReduction}(\mathcal{V}_t, \mathcal{E}_{\text{cand}})$.
\RETURN $\mathcal{T}_t \leftarrow (\mathcal{V}_t, \mathcal{E}_{\text{final}})$.
\end{algorithmic}
\end{algorithm}

\section{Probabilistic Hypothesized Concept Generation}\label{sec:missing_box}
Based on the global taxonomy $\mathcal{T}_t$, we propose a Probabilistic Hypothesized Concept Generation module that produces a set of hypothesized concepts for iteration $t+1$. 
These hypotheses serve as query candidates for the subsequent evidence retrieval stage: by targeting likely emerging topics, they provide more informative search signals than reusing existing taxonomy concepts, enabling \emph{cost-effective retrieval}.
To this end, we introduce a \textit{Geometric Concept Emergence Model} grounded in the geometric box embedding space, where containment constraints act as an intrinsic inductive bias for concept emergence under different structural refinements (Sec.~\ref{sec:concept_emergence}). 
Finally, we enable efficient online updates via a self-supervised, lightweight Method-of-Moments estimator that infers parameters directly from the taxonomy structure (Sec.~\ref{sec:parameter estimation}).

\subsection{Geometric Concept Emergence Model}\label{sec:concept_emergence}
Given the taxonomy at iteration $t$, our goal is to estimate $P(v \mid \mathcal{T}_t)$---the probability that a new concept $v \in \Delta \mathcal{V}_{t+1}=\mathcal{V}_{t+1}\setminus \mathcal{V}_t$ emerges at $t{+}1$.
We model emergence as two coupled decisions:
\begin{itemize}[leftmargin=*,noitemsep,topsep=0pt]
\item \textit{Structural Refinement Context Selection.}
We first select a structural refinement context $C \in \mathbb{C}(\mathcal{T}_t)$, which specifies (i) an anchor (parent) node $p$ together with (when applicable) the subset of its children involved, and (ii) the refinement operator to apply. 
For example, at $t_3$ in Example~\ref{ex:motivating_example}, the structural refinement context corresponds to applying the refinement operator \emph{abstraction} to the anchor parent node \emph{Generation Phase} and its existing children \emph{Flatten} and \emph{Natural Language}.

\item \textit{Context-aware Generation.}
Conditioned on the selected context $C$, we then generate the specific emerging concept $v$ that is compatible with the geometric constraints implied by $C$.
\end{itemize}

Thus, we introduce a latent random variable $\mathcal{C}$ ranging over the refinement-context set $\mathbb{C}(\mathcal{T}_t)$, and factorize the concept \emph{emergence} probability by marginalizing over all possible contexts:
$
P(v \mid \mathcal{T}_t)
=
\sum_{C \in \mathbb{C}(\mathcal{T}_t)}
P\!\left(v \mid \mathcal{C}=C\right)\,
P\!\left(\mathcal{C}=C \mid \mathcal{T}_t\right).
$
Moreover, we adopt the modeling assumption that, for any candidate concept $v$, there is at most one \emph{compatible} context $C$ such that
$P(v \mid \mathcal{C}=C)>0$.
For instance, \emph{Local Retrieval} can arise under the \emph{Retrieval Phase} but is incompatible with the \emph{Indexing Phase}.
While a concept could in principle fit multiple contexts, this assumption holds for the vast majority of cases in practice and keeps the factorization tractable.
Hence, the summation collapses to the single valid term:
\begin{equation}\label{eq:context_collapse}
P(v \mid \mathcal{T}_t)
=
P\!\left(v \mid \mathcal{C}=C\right)\,
P\!\left(\mathcal{C}=C \mid \mathcal{T}_t\right),
\end{equation}
where $C$ denotes the (unique, if it exists) context compatible with $v$.

We instantiate this factorization in a geometric box-embedding space by representing each concept $v$ with a box embedding $B$ and encoding structural relations via containment constraints. This yields two modeling objectives:
(1) \emph{context-aware generation} $P(B \mid \mathcal{C}=C)$, which characterizes how an emerging concept can be geometrically realized under the constraints induced by $C$; and
(2) \emph{context selection prior} $P(\mathcal{C}=C \mid \mathcal{T}_t)$, which prioritizes the most likely refinement contexts under the current taxonomy.

\subsubsection{Context-aware Generation.}\label{sec:context_conditioned_generation}
The key advantage of modeling emergence in the geometric box space is that geometric relations impose strong inductive biases, enabling us to infer plausible emerging nodes under different structural refinement contexts. As illustrated in Fig.~\ref{fig:gaps}: (1) \textit{Specialization} inserts a new child under an existing node, requiring the child box to be contained within the parent box; (2) \textit{Abstraction} inserts an intermediate node to group a set of siblings, requiring the new box to enclose the grouped children while remaining contained within the original parent; and (3) \textit{Expansion} inserts a new child under a parent with existing children, requiring the new box to be contained within the parent while remaining geometrically separated from existing sibling boxes.

Leveraging these geometric inductive biases, we cast context-aware generation as sampling a valid box $B$ from the feasible region induced by the refinement context $C$, which reduces $P(B\mid \mathcal{C}=C)$ to learning distributions over unit intervals $[0,1]$.

\noindent\textbf{Parent-normalized Reparameterization}. We apply a scale-invariant, parent-normalized reparameterization to make the distribution tractable. 
We re-parameterize the problem into a parent-normalized coordinate system anchored at the context anchor (parent) node $p$. For any box $B_q=(\boldsymbol{c}_q,\boldsymbol{b}_q)$ such that $B_q \subseteq B_p$, we define:
\begin{equation}
\label{eq:normalization}
\boldsymbol{\rho}_q = \boldsymbol{b}_q \oslash \boldsymbol{b}_p \in (0,1]^d,\qquad
\boldsymbol{\delta}_q = (\boldsymbol{c}_q - \boldsymbol{c}_p) \oslash \boldsymbol{b}_p \in [-\boldsymbol{1},\boldsymbol{1}]^d,
\end{equation}
where $\oslash$ denotes element-wise division. Here, $\boldsymbol{\rho}_q$ and $\boldsymbol{\delta}_q$ denote the \emph{parent-normalized offset} (relative box size) and the \emph{parent-normalized center} (relative placement) of $B_q$, respectively.

\noindent\textbf{Geometric Coupling and Sequential Factorization}. Directly modeling the joint distribution $P(\boldsymbol{\rho}_v, \boldsymbol{\delta}_v \mid \mathcal{C}=C)$ is still problematic, since treating size and center independently ignores their strict geometric coupling, often yielding invalid boxes that violate containment boundaries.
To enforce these constraints, we decompose the generative objective via the chain rule:
\begin{equation}\label{eq:generative process decomposition}
P(B \mid \mathcal{C}=C)
=
P(\boldsymbol{\delta}_v \mid \boldsymbol{\rho}_v, \mathcal{C}=C)
\cdot
P(\boldsymbol{\rho}_v \mid \mathcal{C}=C).
\end{equation}
This factorization corresponds to a sequential process: we first sample the relative \emph{offset} ($\boldsymbol{\rho}_v$), and then sample the relative \emph{center} ($\boldsymbol{\delta}_v$) conditioned on that size to ensure geometric validity.

\noindent\textbf{Feasible-Interval Modeling}. Given a context $C$, the feasible set $\mathcal{F}(C)$ admits a per-dimension interval characterization: for each dimension $k$, $(\boldsymbol{\rho}_v,\boldsymbol{\delta}_v)\in \mathcal{F}(C)$ implies
$
\rho_v^{k} \in \big[\underline{\rho}^{k},\,\overline{\rho}^{k}\big],
\;
\delta_v^{k} \in \big[\underline{\delta}^{k},\,\overline{\delta}^{k}\big],
$
where the bounds are induced by $C$. Notably, the feasible interval of $\delta_v^{k}$ depends on the chosen $\rho_v^{k}$: \textit{larger boxes admit a smaller range of valid centers under containment/enclosure constraints.}

This interval motivates an auxiliary-variable reparameterization. For each dimension $k$, we introduce unit random variables $u_{\rho}^{k},u_{\delta}^{k}\in[0,1]$ and map them affinely to the context-induced feasible intervals. Specifically, we parameterize $\rho^{k}_v$ as
$
\label{eq:rho_interval_map}
\rho_v^{k}
=
\underline{\rho}^{k} + u_{\rho}^{k}\big(\overline{\rho}^{k}-\underline{\rho}^{k}\big),
\;\;
u_{\rho}^{k}\in[0,1],
$
and, conditioned on the sampled $\rho_v^{k}$, we parameterize the placement coordinate as
$
\label{eq:delta_interval_map}
\delta_v^{k}
=
\underline{\delta}^{k} + u_{\delta}^{k}\big(\overline{\delta}^{k}-\underline{\delta}^{k}\big),
\qquad
u_{\delta}^{k}\in[0,1].
$
We model these unit variables with the Beta distribution $\mathrm{Beta}(\alpha, \beta)$ for two reasons: its native support on $[0,1]$ guarantees geometric validity without clipping, and its shape flexibility captures diverse density profiles, from uniform uncertainty to peaked confidence, within a single probabilistic framework.
We apply this formulation to all context types $\textsc{ctx} \in \{\textsc{spec}, \textsc{abs}, \textsc{exp}\}$, with distinct Beta parameters for $u_{\rho,\textsc{ctx}}^{k}$ and $u_{\delta,\textsc{ctx}}^{k}$. The problem thus reduces to specifying the boundary constraints $[\underline{\rho}, \overline{\rho}]$ and $[\underline{\delta}, \overline{\delta}]$ per context:

\begin{figure}[t]
    \centering
    \includegraphics[width=0.6\linewidth]{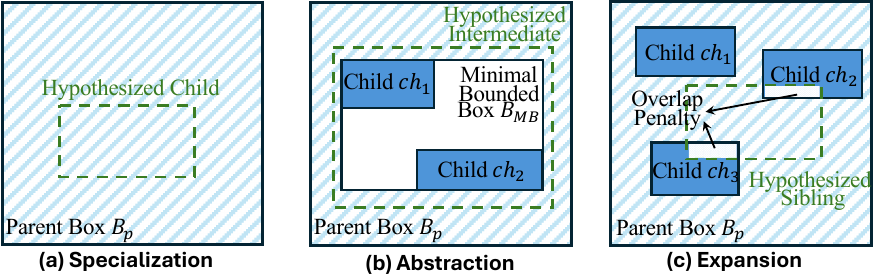}
    \caption{Context-aware Generation.}
    \label{fig:gaps}
\end{figure}

\begin{itemize}[leftmargin=*,noitemsep,topsep=0pt]
\item \textbf{Specialization.}
The new box $B$ must be contained by the parent box $B_p$, imposing $B \subseteq B_p$. Accordingly, the relative size is bounded by the parent’s unit extent, $[\underline{\rho}^{k}, \overline{\rho}^{k}] = [0, 1]$, and the center is constrained to remain within the residual room, yielding $[\underline{\delta}^{k}, \overline{\delta}^{k}] = [-(1-\rho_v^{k}), 1-\rho_v^{k}]$.

\item \textbf{Abstraction.}
Let $B_{\mathrm{MB}}=(\boldsymbol{\delta}_{\mathrm{MB}},\boldsymbol{\rho}_{\mathrm{MB}})$ denote the minimal bounding (MB) box of the grouped children.
The new box must satisfy
$
B_{\mathrm{MB}} \subseteq B \subseteq B_p,
$ 
which yields: (1) \textit{Size Bound:}
$
[\underline{\rho}^{k},\,\overline{\rho}^{k}] = [\rho_{\mathrm{MB}}^{k},\,1];
$
and (2) \textit{Center Bound:} conditioned on $\rho_v^{k}$,
$
[\underline{\delta}^{k},\,\overline{\delta}^{k}]
=
\Big[
\max\!\big\{-1+\rho_v^{k},\;\delta_{\mathrm{MB}}^{k}+\rho_{\mathrm{MB}}^{k}-\rho_v^{k}\big\},\;
\min\!\big\{1-\rho_v^{k},\;\delta_{\mathrm{MB}}^{k}-\rho_{\mathrm{MB}}^{k}+\rho_v^{k}\big\}
\Big].
$

\item \textbf{Expansion.}
This context imposes two constraints:
\textit{(1) Parent-child containment.} The new node must lie inside the parent box. In this case, the feasible intervals coincide with the specialization case. Accordingly, we define a \textit{base proposal distribution} $P_{\text{base}}(B \mid \mathcal{C}=C)$ that adopts the same parametric form as specialization but uses distinct Beta parameters;
\textit{(2) Sibling overlap penalty.} Unlike specialization, $v$ should avoid regions already explained by existing siblings, since taxonomic principles encourage siblings to be semantically distinct rather than redundant~\cite{taxonomy_method}.
Let $\mathcal{B}_{\mathrm{Ch}(p)} = \{B_s\}_{s\in \mathrm{Ch}(p)}$ denote the existing child boxes under $p$. We quantify overlap using the IoU between $B$ and the union of siblings, denoted as $\mathcal{O}(B, \mathcal{B}_{\mathrm{Ch}(p)})$, and define
$
\label{eq:sibling_reweighted}
P(B \mid \mathcal{C}=C)
\propto
P_{\text{base}}(B \mid \mathcal{C}=C)\,
\exp\!\left(-\mathcal{O}(B, \mathcal{B}_{\mathrm{Ch}(p)})/\tau_{\mathrm{sib}}\right),
$
where $\tau_{\mathrm{sib}} > 0$ controls the penalty strength.
\end{itemize}

\subsubsection{Context Selection Prior.}
The Context Selection Prior $P(\mathcal{C}=C \mid \mathcal{T}_t)$ quantifies the likelihood that a candidate context $C$ will be refined at iteration $t+1$. 
We premise this on the geometric intuition that insertion likelihood is proportional to the \emph{void volume} available within that context. We define $\vol_{\mathrm{void}}(C)$ for each structural pattern: 
1) \textit{Specialization void.} For a leaf node $p$ becoming a parent:
$\vol_{\mathrm{void}}(C) = \mathrm{Vol}(B_p).$
2) \textit{Abstraction void.} For grouping children under $p$:
$\vol_{\mathrm{void}}(C) = \mathrm{Vol}(B_p) - \mathrm{Vol}(B_{\mathrm{MB}}).$
3) \textit{Expansion void.} For adding a sibling under $p$:
$\vol_{\mathrm{void}}(C) = \mathrm{Vol}(B_p) - \mathrm{Vol}\!\left(\textstyle\bigcup_{s \in \mathrm{Ch}(p)} B_s\right).$

Finally, we normalize over all candidate contexts $C_i \in \mathbb{C}(\mathcal{T}_t)$ to obtain a valid distribution:
$
\label{eq:context_realization_void}
P(\mathcal{C}=C_i \mid \mathcal{T}_t)
=
\frac{\vol_{\mathrm{void}}(C_i)}{\sum_{C_j \in \mathbb{C}(\mathcal{T}_t)} \vol_{\mathrm{void}}(C_j)}.
$

\subsection{Self-supervised Parameter Estimation}
\label{sec:parameter estimation}
To instantiate the Beta parameters in $P(v \mid \mathcal{T}_t)$, we propose a self-supervised estimation strategy on the current taxonomy $\mathcal{T}_t$. 
Specifically, we generate \emph{pseudo-observations} by masking existing nodes to emulate structural refinement events (Sec.~\ref{sec:pseudo-observation}), and fit the corresponding Beta distributions via a weighted Method-of-Moments (MoM) estimator~\cite{West1979welford}, enabling efficient and incremental updates.

\begin{algorithm}[t]
\caption{Unit-Sample Recovery from Pseudo-Observations}
\label{alg:unit_recovery}
\footnotesize
\begin{algorithmic}[1]
\REQUIRE Pseudo-observation $z_i$; parent-normalized geometry $(\boldsymbol{\rho}_i,\boldsymbol{\delta}_i)$;
\ENSURE Unit samples $(\boldsymbol{u}_{\rho,i},\boldsymbol{u}_{\delta,i}) \in [0,1]^d$ and weight $\omega_i$.

\IF{$\textsc{ctx}_i = \textsc{abs}$}
    \STATE Compute the minimal bounding box: $(\boldsymbol{\delta}_{\mathrm{MB}},\boldsymbol{\rho}_{\mathrm{MB}})\leftarrow \mathrm{MB}(\cdot)$.
\ENDIF

\FOR{each dimension $k=1,\dots,d$}
    \STATE \textbf{Bounds:} derive the per-dimension feasible intervals $[\underline{\rho}^{k},\overline{\rho}^{k}]$ and $[\underline{\delta}^{k},\overline{\delta}^{k}]$ implied by $\textsc{ctx}_i$ and the local context geometry.
    \STATE \textbf{Invert to unit samples:}
    $
    u_{\rho,i}^{k} \leftarrow \frac{\rho_i^{k}-\underline{\rho}^{k}}{\overline{\rho}^{k}-\underline{\rho}^{k}},
    u_{\delta,i}^{k} \leftarrow \frac{\delta_i^{k}-\underline{\delta}^{k}}{\overline{\delta}^{k}-\underline{\delta}^{k}} .
    $
\ENDFOR

\STATE Set $\omega_i \leftarrow 1$.
\IF{$\textsc{ctx}_i = \textsc{exp}$}
    \STATE $\omega_i \leftarrow \exp(\mathcal{O}(B_i)/\tau_{\mathrm{sib}})$.
\ENDIF
\RETURN $(\boldsymbol{u}_{\rho,i},\boldsymbol{u}_{\delta,i}, \omega_i)$.
\end{algorithmic}
\end{algorithm}
\subsubsection{Pseudo-Observation Generation}
\label{sec:pseudo-observation}
Since novel nodes in $\mathcal{T}_{t+1}$ are unobserved at time $t$, we estimate these parameters via self-supervision on the current taxonomy $\mathcal{T}_t$. We construct pseudo-observations $\{z_i\}$ by selecting a \emph{context} from $\mathcal{T}_t$ and temporarily \emph{removing} a node that naturally fits the corresponding refinement pattern; the remaining structure induces a feasible region for that node. We consider three masking strategies:
1) \textbf{Specialization.} For each parent $p$ whose only child is $v$ (so that removing $v$ leaves $p$ a leaf, mirroring the specialization pattern), we take the child context at $p$ and remove $v$.
2) \textbf{Abstraction.} For each chain $v_1 \rightarrow v_2 \rightarrow \mathrm{Ch}(v_2)$, we take the intermediate context between $v_1$ and $\mathrm{Ch}(v_2)$ and remove the intermediate node $v_2$.
3) \textbf{Expansion.} For each parent $p$ with at least two children, we take the sibling context under $p$ and perform leave-one-out by removing one $v\in \mathrm{Ch}(p)$; the siblings that remain in the context distinguish this case from specialization. This process yields a collection of pseudo-observations $\{z_i=(\mathcal{C}_i, v_i)\}$, where $\mathcal{C}_i$ denotes the induced context and $v_i$ denotes the masked target node.

\subsubsection{Parameter Estimation via Weighted Method of Moments}
We estimate parameters of the unit variables via a \emph{weighted} Method-of-Moments (MoM), which matches the weighted empirical mean and variance of unit samples to the corresponding theoretical moments. The estimation proceeds in two stages (Alg.~\ref{alg:unit_recovery} and Alg.~\ref{alg:online_mom_beta}).

\noindent\textbf{Stage I (Alg.~\ref{alg:unit_recovery}): unit-sample recovery and debiasing.}
For each pseudo-observation $z_i$, this stage first computes the context-induced size and center bounds per dimension including the auxiliary information required for the $\ctxabs$ context (Alg.~\ref{alg:unit_recovery}, Line~1--5), and then inverts the affine mapping to obtain unit samples in $[0,1]$ (Line~6).
For the \textsc{exp} context, since the overlap penalty biases observations, the inverse-probability weighting $\omega_i=\exp(\mathcal{O}(B_i)/\tau_{\mathrm{sib}})$ is applied to recover moments of the underlying base proposal (Lines~9--12).

\noindent\textbf{Stage II (Alg.~\ref{alg:online_mom_beta}): online weighted moment updates.}
We stream the weighted unit samples and maintain running statistics---$\Omega$ (weight sum), $\Omega_{\mathrm{sq}}$ (sum of squared weights), $\mu_\omega$ (weighted mean), and $M_{2,\omega}$ (weighted second central moment)---via a weighted Welford update~\cite{West1979welford} (Alg.~\ref{alg:online_mom_beta}, Lines~1--7), and then compute $(\hat{\alpha},\hat{\beta})$ in closed form (Lines~9--10). This yields incremental, up-to-date priors for each context and dimension without revisiting past data.

\begin{algorithm}[t]
\caption{Online Weighted MoM for $\mathrm{Beta}(\alpha,\beta)$}
\label{alg:online_mom_beta}
\footnotesize
\begin{algorithmic}[1]
\REQUIRE Stream of weighted unit samples $\{(u_j,\omega_j)\}$ with $u_j\in[0,1]$.
\ENSURE MoM estimates $(\hat{\alpha},\hat{\beta})$.
\STATE Maintain running statistics: $\Omega=\sum_j \omega_j$, $\Omega_{\mathrm{sq}}=\sum_j \omega_j^2$, $\mu_\omega$, and $M_{2,\omega}$.
\STATE Initialize $\Omega, \Omega_{\mathrm{sq}}, \mu_\omega, M_{2,\omega} \leftarrow 0$.
\FOR{each $(u_j,\omega_j)$}
    \STATE $\Omega^{\mathrm{new}} \leftarrow \Omega + \omega_j$;\;\; $\Delta \leftarrow u_j - \mu_\omega$.
    \STATE $\mu_\omega^{\mathrm{new}} \leftarrow \mu_\omega + \frac{\omega_j}{\Omega^{\mathrm{new}}}\Delta$.
    \STATE $M_{2,\omega} \leftarrow M_{2,\omega} + \omega_j\,\Delta\,(u_j-\mu_\omega^{\mathrm{new}})$.
    \STATE $\Omega_{\mathrm{sq}} \leftarrow \Omega_{\mathrm{sq}} + \omega_j^2$;\;\; $\Omega \leftarrow \Omega^{\mathrm{new}}$;\;\; $\mu_\omega \leftarrow \mu_\omega^{\mathrm{new}}$.
\ENDFOR
\STATE $\hat{v}_\omega \leftarrow \frac{\Omega\,M_{2,\omega}}{\Omega^2-\Omega_{\mathrm{sq}}}$;\;\; $\kappa \leftarrow \frac{\mu_\omega(1-\mu_\omega)}{\hat{v}_\omega}-1$.
\STATE $\hat{\alpha} \leftarrow \mu_\omega\kappa$;\;\; $\hat{\beta} \leftarrow (1-\mu_\omega)\kappa$.
\RETURN $(\hat{\alpha},\hat{\beta})$.
\end{algorithmic}
\end{algorithm}

\section{Cost-effective Evidence Retrieval}\label{sec:paper search}
When a new iteration $t$ begins and new research papers arrive, we leverage the hypothesized concepts derived in the previous iteration as \emph{query candidates} for evidence retrieval. 
The key challenge is to derive an \emph{optimal query plan}: decide which candidate structural refinement contexts $\mathcal{C}_h$ to explore and how much budget to allocate to each under a strict token budget.
A naive Monte Carlo strategy will repeatedly sample candidate concepts from each hypothesis distribution $P(B \mid \mathcal{T}_{t-1})$, which leads to two key limitations.
First, since hypotheses are inferred solely from $\mathcal{T}_{t-1}$, some may not be supported by the repository, wasting budget on irrelevant retrievals.
Moreover, the strategy is both costly and high-variance: it requires many samples for stable behavior, and retrieval quality can be dominated by sampling noise.
To address these issues, we propose a novel budget-aware utility-driven planner that anchors each candidate context to the repository and estimates its expected relevance utility and expected parsing cost, yielding an optimal budget-feasible query plan (Sec.~\ref{sec:planning}).
Then, we materialize this plan to retrieve a final set of papers and extract partial taxonomies from them. (Sec.~\ref{sec:execution}).

\subsection{Budget-aware Utility-driven Planner}\label{sec:planning}
\subsubsection{Query Planning as an Optimization Problem.}
At iteration $t$, we consider the candidate structural refinement contexts extracted from the taxonomy at the previous timestamp
$\mathbb{C}(\mathcal{T}_{t-1})=\{\mathcal{C}_h\}_{h=1}^{H_{t-1}}$, together with the newly arrived papers $\Delta \repo_t$.
The planner aims to derive a query-allocation policy $\Pi_t$ over these contexts that jointly accounts for the following objectives and constraints:

\noindent\textbf{Primary Objective: Utility Anchoring.}
The planner \emph{anchors} each candidate context $\mathcal{C}_h$ by estimating its empirical evidence utility $U_h$, which captures how strongly the available papers can support the corresponding hypothesis. Under an allocation policy $\Pi_t$, the expected relevance gain is
$
\sum_{h=1}^{H_{t-1}}\Pi_{t,h}\,U_h,
$
which favors allocating more budget to contexts whose hypotheses are better supported by the available papers. We detail the utility function $U$ later in Sec.~\ref{sec:gaussian_proxy}.

\noindent\textbf{Secondary Objective: Context Selection Prior.}
Beyond utility maximization, we impose the \emph{context realization prior} that biases exploration toward contexts that are more likely to admit valid emerging concepts under the current taxonomy $\mathcal{T}_{t-1}$.
Let $A_h\in\{0,1\}$ indicate whether context $\mathcal{C}_h$ is selected under $\mathcal{T}_{t-1}$.
We instantiate this preference via the prior allocation
$
\Pi^{\mathrm{prior}}_{t-1,h}
=
P\!\left(A_h{=}1 \mid \mathcal{T}_{t-1}\right),
$
and regularize the planning policy $\Pi_t$ toward $\Pi_{t-1}^{\mathrm{prior}}$ through
$\mathrm{KL}\!\big(\Pi_t\,\|\,\Pi_{t-1}^{\mathrm{prior}}\big)$.

\noindent\textbf{Budget Constraint: Token-Feasible Planning.}
The per-iteration token budget $K_t$ limits the \emph{total} parsing cost incurred when reading related-work sections in iteration $t$.
Since different contexts may retrieve papers of different lengths, we summarize the context-induced parsing burden by an \emph{expected per-paper cost} $\bar{\ell}_h$. Under allocation $\Pi_t$, the expected cost per parsed paper is the mixture
$
\sum_{h=1}^{H_{t-1}}\Pi_{t,h}\bar{\ell}_h.
$
We first estimate a token-feasible parsing batch size by converting the budget into a maximum number of parsable new papers:
$
\bar{\ell}(\Delta \repo_t)
=
\frac{1}{|\Delta \repo_t|}\sum_{d\in \Delta \repo_t}\ell(d),
\;
M_t
=
\min\!\left\{|\Delta \repo_t|,\;
\left\lfloor \frac{K_t}{\bar{\ell}(\Delta \repo_t)} \right\rfloor
\right\}.
$
If we plan to parse $M_t$ papers in this iteration, then the expected total token consumption becomes
$
M_t\sum_{h=1}^{H_{t-1}}\Pi_{t,h}\bar{\ell}_h,
$
and budget feasibility requires
$
M_t\sum_{h=1}^{H_{t-1}}\Pi_{t,h}\bar{\ell}_h \le K_t.
$

\noindent\textbf{Optimization Formulation.}
Combining these components, we have the following constrained maximization problem:
\begin{equation}
\label{eq:planning_obj}
\max_{\Pi_t}\; 
\sum_{h=1}^{H_{t-1}} \Pi_{t,h}\,U_h
\;-\;
\lambda_{\mathrm{plan}}\,\mathrm{KL}\big(\Pi_t\,\|\,\Pi_{t-1}^{\mathrm{prior}}\big) \;\;
\textrm{s.t.}\;
M_t\sum_{h=1}^{H_{t-1}}\Pi_{t,h}\,\bar{\ell}_h \;\le\; K_t,
\end{equation}
where $\lambda_{\mathrm{plan}}>0$ controls the trade-off between maximizing empirical evidence gain and adhering to the context-realization prior.

\subsubsection{Gaussian Proxy Utility and Cost Estimation.}
\label{sec:gaussian_proxy}
The key challenge in Eq.~\ref{eq:planning_obj} is to instantiate, for each candidate context $\mathcal{C}_h$,
(i) an evidence utility $U_h$ and (ii) an expected per-paper parsing cost $\bar{\ell}_h$, in a way that is both principled and computationally tractable.

\noindent\textbf{Gaussian-Proxy Utility.}
Recall that each context $\mathcal{C}_h$ induces a hypothesis distribution over candidate boxes in the learned geometric space,
$\pi_h(B)\triangleq P(B\mid \mathcal{C}_h,\mathcal{T}_{t-1})$.
Drawing $B\sim \pi_h$ and mapping it through the learned box-to-word function $\mathbf{x}=\btow(B)$ induces a (generally intractable)
distribution over query embeddings.
We score a context by its \emph{expected} relevance to the newly arrived papers:
\begin{equation}\label{eq:utility_general}
U_h \triangleq \mathbb{E}_{d\sim \hat{p}_t}\big[r_h(d)\big],
\;\;
r_h(d)\triangleq \mathbb{E}_{B\sim \pi_h}\Big[k\big(\btow(B),\mathbf{s}(d)\big)\Big],
\end{equation}
where $\hat{p}_t$ is uniform over $\Delta\repo_t$, $\mathbf{s}(d)$ denotes the embedding of paper $d$, and $k(\cdot,\cdot)$ is a similarity kernel in the word embedding space.

We adopt three design choices to make Eq.~\ref{eq:utility_general} tractable and scalable.
\emph{(i) Gaussian proxy.} We approximate the induced query distribution by a Gaussian,
$\mathbf{x}_h \approx \mathcal{N}(\mu_h,\Sigma_h)$, with $(\mu_h,\Sigma_h)$ estimated via a first-order Delta approximation
(i.e., moment propagation through $\btow(\cdot)$)~\cite{oehlert1992note}. This yields a compact, low-variance representation of the
context-induced uncertainty and enables closed-form scoring with common kernels.
\emph{(ii) RBF kernel.} We use an RBF kernel $k(\cdot,\cdot)$ as a smooth, bounded, distance-based relevance measure; under a Gaussian
proxy, its expectation admits a closed form~\cite{bishop2006prml}, avoiding expensive per-context Monte Carlo over $B$.
\emph{(iii) Sampling.} Computing $U_h$ exactly requires scanning all $|\Delta\repo_t|$ papers. We instead estimate the outer expectation using $m$ i.i.d.\ samples, achieving sublinear runtime with a tunable accuracy--efficiency trade-off.

\noindent\textbf{Query-specific Cost Estimation.}
To incorporate token costs, we treat $r_h(d)$ as a soft relevance weight and define a query-induced distribution over papers
$
p(d\mid h)\;\propto\; \hat{p}_t(d)\,r_h(d).
$
Let $\ell(d)$ denote the token length of the related-work section of paper $d$.
We define the expected per-paper parsing cost induced by context $\mathcal{C}_h$ as
$
\label{eq:cost-def}
\bar{\ell}_h
=
\mathbb{E}_{d\sim p(d\mid h)}[\ell(d)]
=
\mathbb{E}_{d\sim \hat{p}_t}\!\big[r_h(d)\,\ell(d)\big] \Big/ \mathbb{E}_{d\sim \hat{p}_t}\!\big[r_h(d)\big].
$
Using the same sampled papers $\{d_j\}_{j=1}^{m}$, we estimate $\bar{\ell}_h$ efficiently by
$
\widehat{\bar{\ell}}_h
=
\sum_{j=1}^{m} r_h(d_j)\,\ell(d_j) \Big/ \sum_{j=1}^{m} r_h(d_j).
$

Based on the above, Eq.~\ref{eq:planning_obj} admits a unique closed-form solution, summarized below, with a full proof provided in Appendix~\ref{sec:proof_of_constrained_problem}.

\begin{proposition}[Optimal Budget-Aware KL-Regularized Policy]\label{pro:exploration_policy}
For Eq.~\ref{eq:planning_obj}, there exists $\lambda^*\ge 0$ such that the unique maximizer satisfies
$
\Pi_{t,h}^{*}=\frac{\Pi_{t-1,h}^{\mathrm{prior}}\exp\!\big((U_h-\lambda^* \bar{\ell}_h)/\lambda_{\mathrm{plan}}\big)}{Z(\lambda^*)},
$
where
$
Z(\lambda)=\sum_{h'=1}^{H_{t-1}}\Pi_{t-1,h'}^{\mathrm{prior}}\exp\!\big((U_{h'}-\lambda \bar{\ell}_{h'})/\lambda_{\mathrm{plan}}\big).
$
The optimal $\lambda^*$ can be found by one-dimensional bisection to satisfy the budget constraint (with equality when active):
$
M_t\sum_{h=1}^{H_{t-1}} \Pi_{t,h}^{*}\bar{\ell}_h \le K_t.
$
\end{proposition}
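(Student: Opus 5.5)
We treat Eq.~\ref{eq:planning_obj} as a convex program over the probability simplex $\Delta_H=\{\Pi:\Pi_h\ge 0,\ \sum_h\Pi_h=1\}$, with $H=H_{t-1}$. The objective $\sum_h\Pi_hU_h-\lambda_{\mathrm{plan}}\mathrm{KL}(\Pi\|\Pi^{\mathrm{prior}}_{t-1})$ is strictly concave, because KL is strictly convex in its first argument on the support of the prior. The budget constraint $M_t\sum_h\Pi_h\bar\ell_h\le K_t$ is linear. The feasible set is therefore convex and compact. We assume it is nonempty, which holds for instance when $M_t\min_h\bar\ell_h\le K_t$; this is consistent with the choice $M_t\le\lfloor K_t/\bar\ell(\Delta\repo_t)\rfloor$, but we state it as a standing assumption. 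A maximizer then exists by continuity and compactness, and it is unique by strict concavity. For contexts with $\Pi^{\mathrm{prior}}_{t-1,h}=0$, the KL term forces $\Pi_h=0$, so we work on the support of the prior without loss of generality.

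Next we form the Lagrangian. We attach a multiplier $\lambda\ge 0$ to the budget constraint, after dividing it by $M_t$, which simply rescales $\lambda$ and absorbs $M_t$. We attach a free multiplier $\nu$ to $\sum_h\Pi_h=1$. Nonnegativity is left implicit, since the entropy barrier keeps the optimum in the interior. Because the constraints are affine and a feasible point exists, Slater's condition in its refined form for affine constraints holds. Hence strong duality and the KKT conditions characterize the unique optimum. Setting $\partial/\partial\Pi_h=0$ gives
\[
U_h-\lambda_{\mathrm{plan}}\bigl(\log(\Pi_h/\Pi^{\mathrm{prior}}_{t-1,h})+1\bigr)-\lambda\bar\ell_h-\nu=0 .
\]
Solving yields $\Pi_h\propto\Pi^{\mathrm{prior}}_{t-1,h}\exp((U_h-\lambda\bar\ell_h)/\lambda_{\mathrm{plan}})$, and normalization produces exactly $Z(\lambda)$. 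Equivalently, for fixed $\lambda$ this is the Gibbs variational principle: the maximizer of $\langle\Pi,U-\lambda\bar\ell\rangle-\lambda_{\mathrm{plan}}\mathrm{KL}$ is the exponentially tilted prior. This is the same computation used for Proposition~\ref{prop:coreset_sampling}, with $U_h-\lambda\bar\ell_h$ playing the role of $\lambda_{\mathrm{nov}}n(e)$, so we can invoke that argument directly.

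The remaining step is to pin down $\lambda^*$ and justify bisection; we expect this to be the main obstacle. Define $g(\lambda)=M_t\sum_h\Pi_h(\lambda)\bar\ell_h$ for the tilted family $\Pi(\lambda)$. Differentiating under the exponential family gives $g'(\lambda)=-\tfrac{M_t}{\lambda_{\mathrm{plan}}}\mathrm{Var}_{\Pi(\lambda)}(\bar\ell)\le 0$. So $g$ is continuous and nonincreasing, and strictly decreasing unless $\bar\ell$ is constant on the support. Complementary slackness then splits into two cases. If $g(0)\le K_t$, the constraint is inactive and $\lambda^*=0$. Otherwise, as $\lambda\to\infty$, $\Pi(\lambda)$ concentrates on the contexts in the support that minimize $\bar\ell_h$, so $g(\lambda)\to M_t\min_h\bar\ell_h\le K_t$ by feasibility. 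By the intermediate value theorem there is a $\lambda^*$ with $g(\lambda^*)=K_t$, unique where $g$ is strictly monotone, and it can be located by one-dimensional bisection.

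The delicate point is the boundary case $M_t\min_h\bar\ell_h=K_t$. Here equality holds only in the limit, so the optimum is the limiting distribution supported on the cheapest contexts. We would handle it by noting that the optimum is then the prior restricted to the argmin set, which still has the stated form in a degenerate sense. Alternatively, we would assume strict feasibility to exclude this case. The KKT conditions are sufficient for concave programs, and they hold at $(\Pi(\lambda^*),\lambda^*)$. Combined with the uniqueness shown in the first step, this completes the proof.
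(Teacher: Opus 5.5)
Your proposal is correct and follows essentially the same route as the paper's proof: restrict to the prior's support, get existence and uniqueness from strict concavity, derive the Gibbs form from Lagrangian stationarity, then locate $\lambda^*$ by complementary slackness, the monotonicity $g'(\lambda)=-\frac{M_t}{\lambda_{\mathrm{plan}}}\mathrm{Var}_{\Pi(\lambda)}(\bar\ell)\le 0$, and bisection (your closing appeal to KKT sufficiency is, if anything, tidier than the paper's implicit use of KKT necessity). One correction to your remark on the boundary case $M_t\min_h\bar\ell_h=K_t$ with $g(0)>K_t$, which the paper's proof overlooks: there $g(\lambda)>K_t$ for every finite $\lambda$, so no finite $\lambda^*$ exists, and the optimum is $\Pi_h\propto\Pi^{\mathrm{prior}}_{t-1,h}\exp(U_h/\lambda_{\mathrm{plan}})$ on the argmin set (the $\lambda\to\infty$ limit of your tilted family) rather than the restricted prior itself, so your strict-feasibility assumption is genuinely needed for the statement as written.
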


\subsection{Plan Materialization and Paper Selection}
\label{sec:execution}
With the optimal exploration plan $\Pi_t$ established, we proceed to materialize and execute it. This involves determining the number of hypothetical boxes to sample per direction, transforming them into query embeddings to initialize the search, and selecting the papers that most effectively refine the taxonomy.

\noindent\textbf{Exploration Strategy Materialization.}
We convert the fractional plan $\Pi_t^{*}$ into integer per-context counts by allocating $\tilde n_h=\Pi_{t,h}^{*}M_t$ and choosing integers $n_h\in\mathbb{Z}_{\ge 0}$ with $\sum_{h=1}^{H_{t-1}} n_h = M_t$, obtained via unbiased rounding with a largest-remainder tie-break.

\noindent\textbf{Query Sampling and Retrieval.}
For each context $\mathcal{C}_h$, draw $n_h$ box samples $B$ and map each to a query embedding $\mathbf{x}=\btow(B)$, forming the query set $\mathcal{X}_t$.
For each $\mathbf{x}\in\mathcal{X}_t$, retrieve its top-$L$ nearest papers from the \emph{unparsed index} $\mathcal{R}_t$---the embedding index over papers that have arrived by iteration $t$ but have not yet been parsed---and let $\repo_t^{\mathrm{cand}}\subseteq\mathcal{R}_t$ be the deduplicated union of these candidates.
Let $k(\cdot,\cdot)\in[0,1]$ be the similarity and define $a(d,\mathbf{x})=k(\mathbf{s}(d),\mathbf{x})$ for $d\in\repo_t^{\mathrm{cand}}$, $\mathbf{x}\in\mathcal{X}_t$.

\noindent\textit{Submodular objective.}
We select a subset $S\subseteq\repo_t^{\mathrm{cand}}$ by solving the knapsack-constrained facility-location problem:
\begin{equation}\label{eq:evidence_retrieval}
\max_{S\subseteq\repo_t^{\mathrm{cand}}}\ \ F(S)\ =\ \sum_{\mathbf{x}\in\mathcal{X}_t}\ \max_{d\in S} a(d,\mathbf{x})
\;\;\;\text{s.t.}\;
\sum_{d\in S}\ell(d)\ \le\ K_t,
\end{equation}
Eq.~\ref{eq:evidence_retrieval} is a monotone submodular maximization under a single knapsack constraint, for which the proof is provided in Appendix~\ref{sec:proof_submodularity_f2}. Similarly, we solve it using a standard cost-aware lazy-greedy procedure with partial enumeration.
Similarly, we solve it using a standard cost-aware lazy-greedy procedure with partial enumeration.
Note that Eq.~\ref{eq:evidence_retrieval} does not introduce an additional budget: it enforces the same per-iteration cap $K_t$ on the \emph{realized} selection $S$, whereas the constraint in Eq.~\ref{eq:planning_obj} is its planning-stage counterpart that bounds the \emph{expected} parsing cost when shaping the allocation $\Pi_t$.

\noindent\textbf{LLM-guided Partial Taxonomy Extraction.}
The extraction process first employs regular-expression--based matching to isolate the \emph{Related Work} sections.
Subsequently, \method prompts an LLM to extract partial taxonomies.
Detailed regex matching process and prompt specifications are provided in Appendix~\ref{sec:details_for_LLM_extractor}.

\section{Incremental Semantic Indexing}\label{sec:incremental_index}
At iteration $t$, we maintain a taxonomy-induced inverted index $\mathcal{I}_t$ to support efficient downstream retrieval. Newly arrived papers $\Delta \mathcal{P}_t$ are indexed immediately by assigning them to relevant concept nodes via embedding similarity. For previously indexed papers, we condition updates on the refinement type to avoid full-corpus re-indexing:
(1) \textit{Expansion.} Introducing a new node $v$ requires no historical re-indexing: $\mathcal{I}_t[v]$ is initialized as empty and populated only by matches from $\Delta \mathcal{P}_t$ and future streams.
(2) \textit{Specialization.} Specializing a node $u$ into children $\mathcal{S}_u$ requires redistributing only the affected posting list $\mathcal{I}_{t-1}[u]$ across $\mathcal{S}_u$, avoiding any full-repository scan.
(3) \textit{Abstraction.} Creating a new parent node $a$ over an existing child set $\mathcal{S}_a$ requires only posting-list aggregation:
$
\mathcal{I}_t[a] \leftarrow \bigcup_{c\in\mathcal{S}_a} \mathcal{I}_{t-1}[c].
$
This is a local operation requiring no historical re-scoring.

\noindent\textbf{Supporting Paper Retrieval.} Given a user query $q$, we route it to the taxonomy by selecting the top-$k$ semantically similar concepts $\mathcal{U}_q$. We then form a compact candidate set by unioning the posting lists of these concepts and their descendants:
$
\repo_t(q)=\bigcup_{u\in\mathcal{U}_q}\left(\mathcal{I}_t[u]\ \cup\ \bigcup_{v\in \mathrm{Desc}(u)} \mathcal{I}_t[v]\right).
$
Finally, papers in $\repo_t(q)$ are ranked using standard retrieval models (e.g., BM25 or dense retrieval), enabling efficient candidate generation without repetitive full-corpus scanning.

\section{Experiments}\label{sec:exp}
In this section, we conduct experiments to evaluate: 1) the effectiveness and efficiency of our proposed method on the \problem task (Sec.~\ref{sec:main_results}); 2) the effectiveness and efficiency of the semantic indexing based on the generated taxonomy (Sec.~\ref{sec:effectiveness_downstream}); 3) the sensitivity of key hyper-parameters (Sec.~\ref{sec:hyperparameter_study}); 4) the ablation study (Sec.~\ref{sec:sharper_ablation}); 5) diagnostic studies on extraction robustness, numerical stability, and predictive precision (Sec.~\ref{sec:robustness_studies}); and 6) a qualitative case study (Sec.~\ref{sec:case_llm2sql}).

\subsection{Evaluation of Taxonomy Maintenance}
\label{sec:main_results}
\noindent\textbf{Benchmarks.} 
We simulate an ``in-the-wild'' setting using the Computer Science (CS) subset of the arXiv corpus~\cite{ar5iv-dataset}, which comprises over 270K LaTeXML-converted HTML documents. 
To establish ground truth, we curate a benchmark of 12 survey papers spanning diverse CS domains. We select survey papers that are either published in top-tier venues or highly cited, manually extracting and verifying their author-curated taxonomies. 
To capture the evolutionary dynamics of each topic, we assign distinct start times to each domain, ensuring that the topic has reached a sufficient level of development prior to the simulation. We then partition the remaining timeline into seven equal intervals $t=1,\dots,7$, where sequential paper arrivals trigger iterative taxonomy updates; the first interval supplies the initial corpus snapshot ($\repo_0$ in Definition~\ref{def:problem}), so the initial construction (\emph{Init}) is evaluated at $t{=}1$.
To avoid data leakage, we remove the benchmark survey papers from the arXiv corpus.
Benchmark statistics are reported in Table~\ref{tab:benchmark}, and further details on the survey papers are provided in Appendix~\ref{sec:details_for_experiment_setting}.

\noindent\textit{Remark.} We adopt a single highly authoritative survey per domain, selected from top-tier venues or with high citation counts, rather than aggregating across multiple surveys. Aggregation would require either redefining the evaluation protocol against an ensemble of ground truths or manually reconciling disparate hierarchies into a unified gold standard, both of which would introduce substantial curator-induced bias. We acknowledge that survey-derived taxonomies favor cleaner, consensus-driven hierarchies over fully organic ones; however, they capture the structured knowledge that downstream users and systems actually rely on, and purely organic benchmarks at scale are notoriously difficult to obtain. We leave the construction of such benchmarks as a promising direction for future work.

\noindent\textbf{Baselines.}
To comprehensively evaluate \method in the ``in-the-wild'' setting, we design baselines through a modular pipeline that factorizes:
(i) a \emph{taxonomy construction backbone}, (ii) an \emph{updating mechanism} for sequential arrivals, and (iii) a \emph{paper search strategy} for selecting evidence papers from the newly arrived papers. 
We denote each baseline instance as $\textsc{Backbone}_{\textsc{Update},\textsc{Search}}$.

\begin{table}[t]
\centering
\small
\setlength{\tabcolsep}{6pt}

\renewcommand{\arraystretch}{1.1} 

\setlength{\aboverulesep}{0pt}
\setlength{\belowrulesep}{0pt}

\caption{Statistics of the benchmark.}
\footnotesize
\label{tab:benchmark}
\begin{tabularx}{\linewidth}{lXcc}
\toprule
\textbf{Domain} & \textbf{Research Topic (Input)} & \textbf{Levels} & \textbf{Nodes} \\
\midrule
\multirow{5}{*}{DB}
& Schema Design \& Tuning (SDT) & 4 & 24 \\
& LLM-based Text-to-SQL (LLM2SQL) & 5 & 37 \\
& HTAP Databases (HTAP) & 2 & 28 \\
& Graph ANN Vector Search (ANN-VS) & 3 & 21 \\
& Distributed Graph Algorithms (DGA) & 3 & 31 \\
\midrule
\multirow{2}{*}{SE}
& Auto Program Repair (APR) & 3 & 14 \\
& AI for Software Testing (AI4ST) & 3 & 48 \\
\midrule
\multirow{2}{*}{ML}
& Machine Unlearning (MU) & 3 & 16 \\
& Fairness in ML (FairML) & 3 & 19 \\
\midrule
IR
& Multimodal Recommenders (MRS) & 3 & 29 \\
\midrule
HCI
& Industrial Extended Reality (I-XR) & 3 & 58 \\
\midrule
Robotics
& Multi-Robot Task Allocation (MRTA) & 5 & 23 \\
\bottomrule
\end{tabularx}
\end{table}

\begin{table}[t]
\scriptsize
\setlength{\tabcolsep}{1.8pt}
\caption{Taxonomy quality and efficiency comparison across all compared methods. Each row \textsc{U}+\textsc{A} under a backbone denotes the variant $\textsc{Backbone}_{\textsc{U},\textsc{A}}$, where \textsc{U} $\in$ \{S, I\} is the updating mechanism and \textsc{A} $\in$ \{K, C, H\} the evidence acquisition strategy. F1 metrics are reported as mean$_{\pm\text{std}}$ over 5 random seeds. $\dagger$ denotes a statistically significant improvement of our method GIST$_{I,H}$ over \emph{all baselines} under paired $t$-tests at $p<0.01$.}
\label{tab:main_results}
\begin{tabular}{l@{\hspace{4pt}}ccccccccc}
\toprule
 & \multicolumn{3}{c}{Node Soft F1} & \multicolumn{3}{c}{Edge Soft F1} & Runtime & Tokens & Cost \\
\cmidrule(lr){2-4}\cmidrule(lr){5-7}
 & Init & Final & $\Delta$/ep & Init & Final & $\Delta$/ep & (min) & (k) & (\$) \\
\midrule
\multicolumn{10}{l}{\emph{\taxogen}} \\
S+K & 24.8$_{\pm 0.3}$ & 31.2$_{\pm 0.4}$ & +1.07$_{\pm 0.05}$ & 19.7$_{\pm 0.2}$ & 24.7$_{\pm 0.3}$ & +0.83$_{\pm 0.04}$ & 76.4 & -- & -- \\
S+C & 24.8$_{\pm 0.3}$ & 33.8$_{\pm 0.5}$ & +1.50$_{\pm 0.06}$ & 19.7$_{\pm 0.2}$ & 27.1$_{\pm 0.4}$ & +1.23$_{\pm 0.05}$ & 77.8 & -- & -- \\
I+K & 24.8$_{\pm 0.3}$ & 29.4$_{\pm 0.3}$ & +0.77$_{\pm 0.04}$ & 19.7$_{\pm 0.2}$ & 23.6$_{\pm 0.3}$ & +0.65$_{\pm 0.04}$ & \textbf{22.7} & -- & -- \\
I+C & 24.8$_{\pm 0.3}$ & 31.7$_{\pm 0.4}$ & +1.15$_{\pm 0.05}$ & 19.7$_{\pm 0.2}$ & 25.4$_{\pm 0.3}$ & +0.95$_{\pm 0.04}$ & 23.6 & -- & -- \\
\midrule
\multicolumn{10}{l}{\emph{\hyperexpan}} \\
S+K & 32.8$_{\pm 0.4}$ & 40.7$_{\pm 0.5}$ & +1.32$_{\pm 0.06}$ & 26.7$_{\pm 0.3}$ & 33.8$_{\pm 0.4}$ & +1.18$_{\pm 0.05}$ & 91.2 & -- & -- \\
S+C & 32.8$_{\pm 0.4}$ & 43.6$_{\pm 0.6}$ & +1.80$_{\pm 0.07}$ & 26.7$_{\pm 0.3}$ & 36.4$_{\pm 0.5}$ & +1.62$_{\pm 0.06}$ & 92.7 & -- & -- \\
I+K & 32.8$_{\pm 0.4}$ & 38.7$_{\pm 0.4}$ & +0.98$_{\pm 0.05}$ & 26.7$_{\pm 0.3}$ & 32.4$_{\pm 0.3}$ & +0.95$_{\pm 0.04}$ & 27.1 & -- & -- \\
I+C & 32.8$_{\pm 0.4}$ & 41.2$_{\pm 0.5}$ & +1.40$_{\pm 0.06}$ & 26.7$_{\pm 0.3}$ & 34.8$_{\pm 0.4}$ & +1.35$_{\pm 0.05}$ & 28.4 & -- & -- \\
\midrule
\multicolumn{10}{l}{\emph{\taxoalign}} \\
S+K & 38.6$_{\pm 0.7}$ & 56.2$_{\pm 0.8}$ & +2.93$_{\pm 0.13}$ & 32.3$_{\pm 0.6}$ & 47.8$_{\pm 0.7}$ & +2.58$_{\pm 0.11}$ & 451.8 & 2438 & 7.31 \\
S+C & 38.6$_{\pm 0.7}$ & 60.4$_{\pm 0.9}$ & +3.63$_{\pm 0.14}$ & 32.3$_{\pm 0.6}$ & 51.3$_{\pm 0.8}$ & +3.17$_{\pm 0.12}$ & 458.4 & 2552 & 7.64 \\
I+K & 38.6$_{\pm 0.7}$ & 46.4$_{\pm 0.6}$ & +1.30$_{\pm 0.10}$ & 32.3$_{\pm 0.6}$ & 39.7$_{\pm 0.5}$ & +1.23$_{\pm 0.09}$ & 113.4 & 871 & 2.62 \\
I+C & 38.6$_{\pm 0.7}$ & 49.8$_{\pm 0.7}$ & +1.87$_{\pm 0.11}$ & 32.3$_{\pm 0.6}$ & 42.6$_{\pm 0.6}$ & +1.72$_{\pm 0.10}$ & 117.2 & 928 & 2.78 \\
\midrule
\multicolumn{10}{l}{\emph{\taxoadapt}} \\
S+K & 42.8$_{\pm 0.8}$ & 61.7$_{\pm 0.9}$ & +3.15$_{\pm 0.13}$ & 35.8$_{\pm 0.7}$ & 53.6$_{\pm 0.8}$ & +2.97$_{\pm 0.12}$ & 384.6 & 3756 & 11.27 \\
S+C & 42.8$_{\pm 0.8}$ & 66.2$_{\pm 0.9}$ & +3.90$_{\pm 0.14}$ & 35.8$_{\pm 0.7}$ & 57.8$_{\pm 0.8}$ & +3.67$_{\pm 0.12}$ & 391.8 & 3895 & 11.69 \\
I+K & 42.8$_{\pm 0.8}$ & 52.4$_{\pm 0.7}$ & +1.60$_{\pm 0.10}$ & 35.8$_{\pm 0.7}$ & 44.1$_{\pm 0.6}$ & +1.38$_{\pm 0.09}$ & 114.3 & 947 & 2.84 \\
I+C & 42.8$_{\pm 0.8}$ & 56.7$_{\pm 0.8}$ & +2.32$_{\pm 0.12}$ & 35.8$_{\pm 0.7}$ & 47.6$_{\pm 0.7}$ & +1.97$_{\pm 0.10}$ & 117.3 & 1003 & 3.01 \\
\midrule
\multicolumn{10}{l}{\emph{\method}} \\
S+K & \textbf{47.5}$_{\pm 0.5}$ & 58.6$_{\pm 0.5}$ & +1.85$_{\pm 0.07}$ & \textbf{40.8}$_{\pm 0.4}$ & 50.4$_{\pm 0.4}$ & +1.60$_{\pm 0.06}$ & 113.4 & 641 & 1.91 \\
S+C & \textbf{47.5}$_{\pm 0.5}$ & 62.3$_{\pm 0.6}$ & +2.47$_{\pm 0.08}$ & \textbf{40.8}$_{\pm 0.4}$ & 53.7$_{\pm 0.5}$ & +2.15$_{\pm 0.07}$ & 114.6 & 682 & 2.04 \\
S+H & \textbf{47.5}$_{\pm 0.5}$ & \textbf{76.0}$_{\pm 0.6}$ & \textbf{+4.75}$_{\pm 0.10}$ & \textbf{40.8}$_{\pm 0.4}$ & \textbf{67.6}$_{\pm 0.5}$ & \textbf{+4.47}$_{\pm 0.09}$ & 117.3 & 713 & 2.14 \\
I+K & \textbf{47.5}$_{\pm 0.5}$ & 55.3$_{\pm 0.4}$ & +1.30$_{\pm 0.06}$ & \textbf{40.8}$_{\pm 0.4}$ & 48.2$_{\pm 0.4}$ & +1.23$_{\pm 0.06}$ & 35.8 & \textbf{421} & \textbf{1.26} \\
I+C & \textbf{47.5}$_{\pm 0.5}$ & 60.5$_{\pm 0.5}$ & +2.17$_{\pm 0.07}$ & \textbf{40.8}$_{\pm 0.4}$ & 53.1$_{\pm 0.4}$ & +2.05$_{\pm 0.06}$ & 36.7 & 463 & 1.38 \\
I+H & \textbf{47.5}$^{\dagger}_{\pm 0.5}$ & 73.5$^{\dagger}_{\pm 0.5}$ & +4.33$^{\dagger}_{\pm 0.10}$ & \textbf{40.8}$^{\dagger}_{\pm 0.4}$ & 65.4$^{\dagger}_{\pm 0.4}$ & +4.10$^{\dagger}_{\pm 0.09}$ & 37.5 & 498 & 1.49 \\
\bottomrule
\end{tabular}
\end{table}

\noindent\emph{\underline{Backbones.}}
We instantiate four representative baselines spanning the major taxonomy-construction paradigms: 1)~\taxogen~\cite{Zhang2018TaxoGen}, a point-embedding clustering method that iteratively partitions a topic term graph; 2)~\hyperexpan~\cite{ma2021hyperexpan}, a hyperbolic-embedding expansion method that places new concepts in negatively-curved space and attaches them via geometric distances; 3)~\taxoalign~\cite{lahiri2025taxoalign} builds taxonomies via a three-stage pipeline: locating topic-related knowledge slices, aggregating these slices to form an initial taxonomy, and performing LLM-based refinement; and 4)~\taxoadapt~\cite{kargupta2025taxoadapt} starts with an LLM-generated taxonomy and iteratively \emph{grounds} it to the collected documents. It employs top-down classification and selective expansion, driven by corpus density and unmapped signals, to generate a final corpus-aligned taxonomy. To provide initial concept seeds for \taxogen and \hyperexpan, we use BERTopic~\cite{grootendorst2022bertopic} to extract them from the input corpus.

\noindent\emph{\underline{Updating mechanisms.}}
We instantiate each backbone under two standard updating mechanisms:
(1) \textit{Scratch (S):} Ignores the previous state and leverages the full cumulative corpus up to interval $t$.
For the baselines (\taxoalign and \taxoadapt), this involves reconstructing the taxonomy entirely from the cumulative set of papers.
For \method, this entails retraining the mapping-model parameters from scratch using all accumulated data.
(2) \textit{Incremental (I):} Updates the taxonomy $\mathcal{T}_{t-1} \rightarrow \mathcal{T}_t$ using only the newly arrived papers in interval $t$.
For the baselines, we implement a generate-and-merge approach: (i) running the backbone on the new batch to generate a ``delta'' taxonomy $\Delta\mathcal{T}_t$; and (ii) merging $\Delta\mathcal{T}_t$ with $\mathcal{T}_{t-1}$ via concept alignment and edge union.
For \method, we employ the proposed incremental training strategy (Sec.~\ref{sec:incremental_training}).

\noindent\emph{\underline{Evidence Acquisition strategies.}}
We consider three strategies: 
(1) \textit{Topic-keyword search (K)}: queries using the initial topic keywords as the query at each update; 
(2) \textit{Concept search (C)}: queries using concepts in the current taxonomy; 
and (3) \textit{Hypothesized-concept search (H)}: queries using the hypothesized concepts (specific to our method).
To ensure a fair comparison across methods that may rely on different parts of each paper, we cap the amount of raw text processed from newly retrieved papers at every update.
Specifically, the total evidence budget per iteration is limited to 400{,}000 tokens in the first iteration and 100{,}000 tokens thereafter, resulting in a 1{,}000{,}000-token budget per topic. We use GPT-5-mini for all methods.
Note that this budget caps only the \emph{evidence text supplied as initial LLM input} (i.e., the raw paper content parsed per update); the total token counts reported in Table~\ref{tab:main_results} additionally include each method's own prompting scaffolds, intermediate reasoning calls, and generated outputs, and can therefore exceed the evidence budget.

\noindent\textbf{Evaluation Metrics.} We evaluate three dimensions: effectiveness, efficiency, and monetary cost. For \emph{effectiveness}, we report the standard \nodefone and \edgefone metrics~\cite{lahiri2025taxoalign}. \nodefone quantifies concept-level agreement via semantic maximum-weight bipartite matching, while \edgefone measures agreement on parent--child relations, weighted by the similarity of their endpoints. 
At each iteration we evaluate against the final ground truth ($t=7$). \emph{Efficiency} is measured by total runtime across all iterations, and \emph{monetary cost} by total API expense. 

\noindent\textbf{Implementation and Hyper-parameters.}
We implement \method in PyTorch and run all experiments on a Linux server equipped with an NVIDIA RTX 4090 GPU. For geometric representation, we set the box-embedding dimension to $d=12$. In the geometric-guided taxonomy integration module, we set the reliability weight to $\lambda_{\mathrm{rel}}=0.3$ and the novelty weight to $\lambda_{\mathrm{nov}}=0.2$. For the \emph{Cost-effective Evidence Retrieval} module, we set the planning weight to $\lambda_{\mathrm{plan}}=0.2$. For Gaussian proxy utility estimation, we employ a sample size of $m=1000$, which keeps the sampling error of the estimated utilities negligible in practice. The source code is publicly available in~\cite{gist_code_repo}.

\noindent\textbf{Results}. The comparison results are presented in Table~\ref{tab:main_results}. We make the following observations:
\begin{itemize}[leftmargin=*,noitemsep,topsep=0pt]
\item \textbf{Overall.} The \method~series achieves the best taxonomy quality across every quality metric in the grid. The only baselines with lower runtime or monetary cost are the two embedding-only methods (\taxogen~and \hyperexpan), but their taxonomy quality is far below competitive levels (Final Node F1 below 44\%). Among methods with competitive quality, the proposed efficient variant \methodvar{I}{H} attains the best cost--quality trade-off: an 11.0\% / 13.1\% improvement in Final Node / Edge F1 over the strongest LLM baseline \adaptvar{S}{C}, statistically significant at $p<0.01$, while requiring only 9.6\% of its runtime and 12.7\% of its monetary cost. Note that \hyperexpan and \taxogen do not involve LLM processing and therefore incur no API cost and run efficiently. However, their performance lags substantially behind LLM-based methods, reflecting the inherent trade-off between cost and taxonomy quality. 

\item \textbf{Backbone.} Because the \emph{Init} score is measured at $t{=}1$ before any updating mechanism or evidence acquisition strategy takes effect, it isolates the contribution of the backbone itself. The four baseline backbones cluster into three quality tiers reflecting their representational power. The point-embedding backbone (\taxogen) cannot model the asymmetric containment relations required for is-a hierarchies and remains below 35\% \emph{Final} \nodefone~under any update or search configuration. The hyperbolic backbone (\hyperexpan) modestly improves this via negatively-curved geometry but still tops out around 44\%. The LLM-driven backbones (\taxoalign~and \taxoadapt) reach 60--66\% by leveraging LLM reasoning, yet exhibit the high token cost and structural hallucination characteristic of free-form generation. \method's box-embedding backbone, in contrast, attains the highest \emph{Init} ($47.5$ Node, $40.8$ Edge) -- evidence that combining expert-curated partial taxonomies with a box-embedding space, which natively encodes containment, yields a stronger structural prior than any of the alternative backbones.

\item \textbf{Updating Mechanism.} An effective updating mechanism is essential for the task of \problem. Without one (i.e., retraining from scratch at each interval), runtime grows substantially for every method, and monetary cost grows dramatically for the LLM-intensive backbones \taxoadapt~and \taxoalign.
With our dedicated updating mechanism and novelty-aware coreset selection, \method's incremental variant \methodvar{I}{H} retains $96.7\%$ of the fully retrained \methodvar{S}{H}'s quality (Final Node F1: $73.5$ vs.\ $76.0$; Final Edge F1: $65.4$ vs.\ $67.6$, a $3.3\%$ relative drop) while cutting runtime from $117.3$ to $37.5$ minutes ($3.1\times$ speedup) and monetary cost from \$2.14 to \$1.49 ($30.4\%$ reduction). This demonstrates that scratch retraining is largely unnecessary when guided by the hybrid evidence strategy.
By contrast, naive incremental updating in \taxoadapt~and \taxoalign~reduces runtime by $70.1\%$ / $74.4\%$ and monetary cost by $74.2\%$ / $63.6\%$, but at the expense of a $14.4\%$ / $17.5\%$ relative quality drop. The gap between \method's $3.3\%$ drop and the $14$--$18\%$ drop suffered by these baselines shows that simple merging is insufficient for preserving structural integrity, whereas \method's coreset-based mechanism keeps nearly all of the full-retraining quality at a fraction of the cost.

\item \textbf{Evidence Acquisition Strategy.} The evidence acquisition strategy via \emph{hypothesized concepts} drives the fastest taxonomy growth per epoch. It enables more targeted information acquisition and avoids retrieving redundant papers that contribute only to already-known concepts. Compared with leveraging existing \emph{concepts} and \emph{input search topics}, \method's per-epoch gain ($\Delta$/ep) is $2.0\times$ and $3.3\times$ higher on Node Soft F1 (\methodvar{I}{H}: $+4.33$, \methodvar{I}{C}: $+2.17$, \methodvar{I}{K}: $+1.30$), with the same $2.0\times$ / $3.3\times$ gap on Edge ($+4.10$ vs.\ $+2.05$ / $+1.23$).
Querying with \emph{concepts} from the current taxonomy is also more likely to surface papers capable of expanding the taxonomy than querying with the original research topics alone. Conversely, relying on input keywords yields the smallest per-epoch progress among LLM-based incremental variants.
\end{itemize}

\subsection{Effectiveness in Downstream Literature Search}\label{sec:effectiveness_downstream}
We evaluate the effectiveness and efficiency of the proposed semantic indexing utilizing the maintained taxonomies. We reuse the arXiv corpus and the benchmark, with the following modifications.

\noindent\textbf{Query and Ground-Truth Construction.}
For each topic, we treat each concept node as an information need and generate two query types: (i) \emph{keyword queries} using the concept name, and (ii) \emph{natural-language queries} by prompting an LLM to produce a concise, user-style search query from the concept name and description; we generate 5 queries of each type per topic. We derive ground truth from the corresponding survey paper: for each concept query, the relevant set comprises papers explicitly assigned to that concept/category by the survey. We ensure these papers are present in the repository by adding any missing ones to the arXiv corpus.

\begin{table}[t]
\small
\setlength{\tabcolsep}{2pt}
\caption{Effectiveness comparison on paper search tasks. GIST and GIST$^{*}$ maintain the same taxonomy and produce identical indexes, hence identical retrieval effectiveness; The symbols $\dagger$ and $\ddagger$ indicate that GIST significantly outperforms all non-GIST baselines under paired $t$-tests at $p<0.01$ and $p<0.05$, respectively.}
\label{tab:paper_search}
\begin{tabular}{l|cc|cc|cc}
\hline
\multirow{2}{*}{Method} & \multicolumn{2}{c|}{BM25} & \multicolumn{2}{c|}{Dense} & \multicolumn{2}{c}{Hybrid} \\ \cline{2-7}
                        & nDCG@10 & MRR & nDCG@10 & MRR & nDCG@10 & MRR \\ \hline
Flat-Ret      & 0.374 & 0.333 & 0.402 & 0.356 & 0.438 & 0.378 \\
\taxoalign    & 0.382 & 0.339 & 0.409 & 0.363 & 0.448 & 0.387 \\
\taxoadapt    & 0.397 & 0.350 & 0.423 & 0.373 & 0.463 & 0.401 \\
\method{} / GIST$^{*}$ & \textbf{0.409}$^{\dagger}$ & \textbf{0.360}$^{\dagger}$ & \textbf{0.439}$^{\dagger}$ & \textbf{0.386}$^{\dagger}$ & \textbf{0.479}$^{\ddagger}$ & \textbf{0.420}$^{\dagger}$ \\ \hline
\end{tabular}
\end{table}

\begin{table}[t]
\small
\setlength{\tabcolsep}{3pt}
\caption{Evaluation of indexing efficiency. Index (MB), Build (min), and Per-paper (ms) count only the \emph{additional} taxonomy-induced index (posting lists + metadata) built on top of the base BM25/dense retrieval indexes shared by all methods.}
\label{tab:index_efficiency}
\begin{tabular}{lcccc}
\toprule
Method & Index (MB) & Build (min) & Per-paper (ms) & Query (ms) \\
\midrule
Flat-Ret      & 0.00          & 0.0          & 0.0            & 58.0 \\
\taxoalign    & 2.76          & 6.5          & 82.3           & 45.7 \\
\taxoadapt    & 3.15          & 7.4          & 91.8           & \textbf{41.9} \\
\methods      & \textbf{2.64} & 6.8          & 86.5           & 46.7 \\
\method       & \textbf{2.64} & \textbf{3.4} & \textbf{41.2}  & 45.4 \\
\bottomrule
\end{tabular}
\end{table}

\noindent\textbf{Evaluation Metrics.}
We report standard ranked-retrieval metrics averaged over topics, including \emph{nDCG@10} and \emph{MRR}, under three retrieval paradigms: \emph{BM25}, \emph{Dense}, and \emph{Hybrid}.
To quantify indexing efficiency, we additionally report:
(i) \emph{index size} of the taxonomy-induced inverted index at $t{=}7$ (posting lists + metadata),
(ii) cumulative \emph{index build/update time} across all iterations, counting only the time to materialize and maintain the index given the constructed taxonomy (i.e., assigning papers to concept nodes and updating posting lists), and
(iii) \emph{query latency}.

\noindent\textbf{Baselines.} We compare \method against taxonomies produced by \taxoalign and \taxoadapt (using their best-performing variants), and a flat retrieval baseline (\emph{Flat-Ret}) that performs retrieval directly over the corpus without any taxonomy-induced semantic index.
To isolate the benefit of incremental indexing, we further include $\method^{*}$, which uses the same taxonomy as \method but \emph{rebuilds} the semantic index at each iteration via full-corpus scanning.

\noindent\textbf{Results}.
Table~\ref{tab:paper_search} reports downstream search effectiveness. We observe that all taxonomy-indexed methods consistently outperform the \emph{Flat-Ret} baseline across BM25, Dense, and Hybrid retrieval, validating that a structured taxonomy improves semantic access to scholarly corpora. Moreover, \method achieves the best effectiveness overall. Compared with the strongest baseline \taxoadapt, \method yields consistent gains; averaging nDCG@10 and MRR, it improves performance by 2.9\%, 3.6\%, and 4.1\% under BM25, Dense, and Hybrid, respectively.
Table~\ref{tab:index_efficiency} reports indexing efficiency under two complementary views: cumulative \emph{build time} across the 7 timestamps, and the \emph{per-paper marginal cost} that characterizes streaming maintenance. \method halves the cumulative build time of $\method^{*}$ (full-scan indexing, 3.4 vs.\ 6.8 minutes) while preserving the same index size (2.64 MB), comparable query latency (45.4 vs.\ 46.7 ms), and identical retrieval effectiveness. Decomposed at the per-paper level, the marginal indexing cost of \method is only 41.2 ms, roughly 2$\times$ faster than $\method^{*}$ (86.5 ms) and the rebuild-from-scratch baselines (\taxoalign and \taxoadapt at 82.3--91.8 ms). 
Together, these results demonstrate that the induced taxonomy can serve as an accurate index structure at light maintenance overhead.

\subsection{Hyper-parameter Study}\label{sec:hyperparameter_study}
We analyze the sensitivity of \method{} to the key coefficients $\lambda_{\mathrm{rel}}$ (Eq.~\ref{eq:cover_rel}), $\lambda_{\mathrm{nov}}$ (Eq.~\ref{eq:nov-kl}), and $\lambda_{\mathrm{plan}}$ (Eq.~\ref{eq:planning_obj}), as well as the box dimension $d$, reporting the final Node and Edge Soft F1 of \method averaged over the 12 benchmark topics (Fig.~\ref{fig:hparam}). We make the following observations: 1) The reliability weight $\lambda_{\mathrm{rel}}$ balances concept coverage against evidence quality: performance peaks at $\lambda_{\mathrm{rel}}{=}0.3$ and stays close to the peak within $[0.2,0.5]$, while values toward either extreme over-emphasize one side of the trade-off---coverage-only selection ($\lambda_{\mathrm{rel}}{=}0$) admits noisy relations, whereas overly reliability-heavy selection over-prunes coverage---and degrade taxonomy quality. 2) The novelty weight $\lambda_{\mathrm{nov}}$ controls the trade-off between loss preservation and diversity in coreset selection: a moderate setting ($\lambda_{\mathrm{nov}}{=}0.2$) works best. Disabling novelty entirely causes the sharpest drop in Fig.~\ref{fig:hparam}(b), as redundant coresets waste the update budget, whereas larger values sacrifice loss preservation for novelty and reduce coherence. 3) The planning weight $\lambda_{\mathrm{plan}}$ balances the context-realization prior against utility-driven exploitation: a moderate value ($\lambda_{\mathrm{plan}}{=}0.2$) is optimal. Smaller values cost only mildly, concentrating the budget myopically on currently high-utility contexts, while larger values over-regularize toward the prior and miss emerging gaps, degrading more sharply as the prior dominates. 4) The box dimension $d$ determines geometric expressiveness: small $d$ underfits the hierarchical structure, performance peaks around $d{=}12$, and larger dimensions gradually \emph{degrade} quality---consistent with overfitting the limited self-supervised signals---while raising computational cost, justifying our default of $d{=}12$.

\begin{figure}[t]
  \centering
  \captionsetup[subfigure]{font=small,skip=0pt}

  \includegraphics[width=0.35\columnwidth]{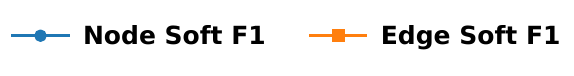}

  \begin{subfigure}[t]{0.24\columnwidth}
    \centering
    \includegraphics[width=\linewidth]{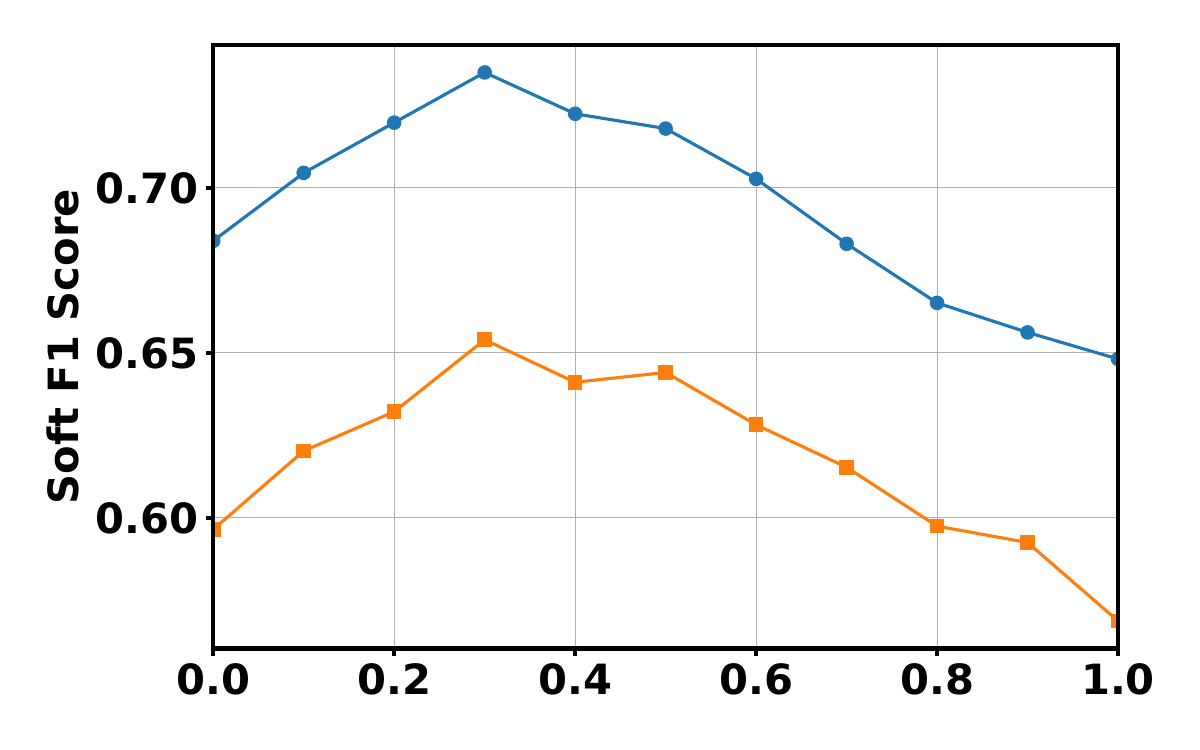}
    \caption{$\lambda_{\mathrm{rel}}$}
    \label{fig:hparam_rel}
  \end{subfigure}
  \hfill
  \begin{subfigure}[t]{0.24\columnwidth}
    \centering
    \includegraphics[width=\linewidth]{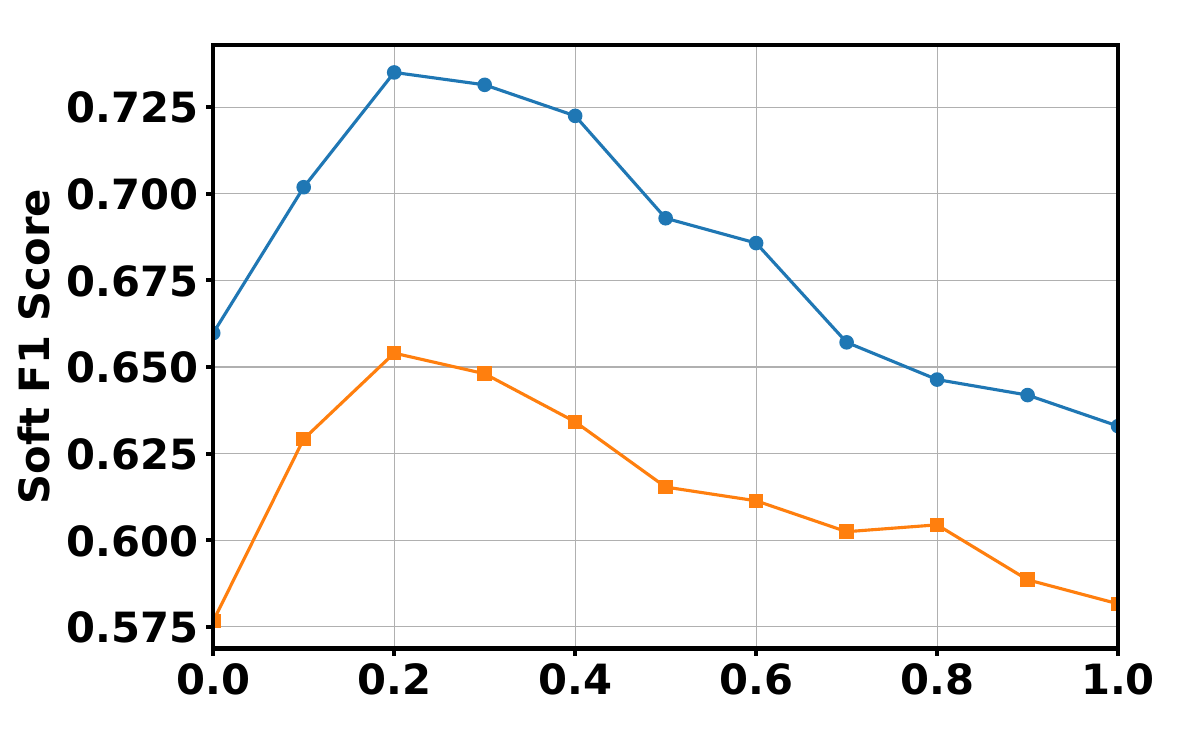}
    \caption{$\lambda_{\mathrm{nov}}$}
    \label{fig:hparam_nov}
  \end{subfigure}
  \hfill
  \begin{subfigure}[t]{0.24\columnwidth}
    \centering
    \includegraphics[width=\linewidth]{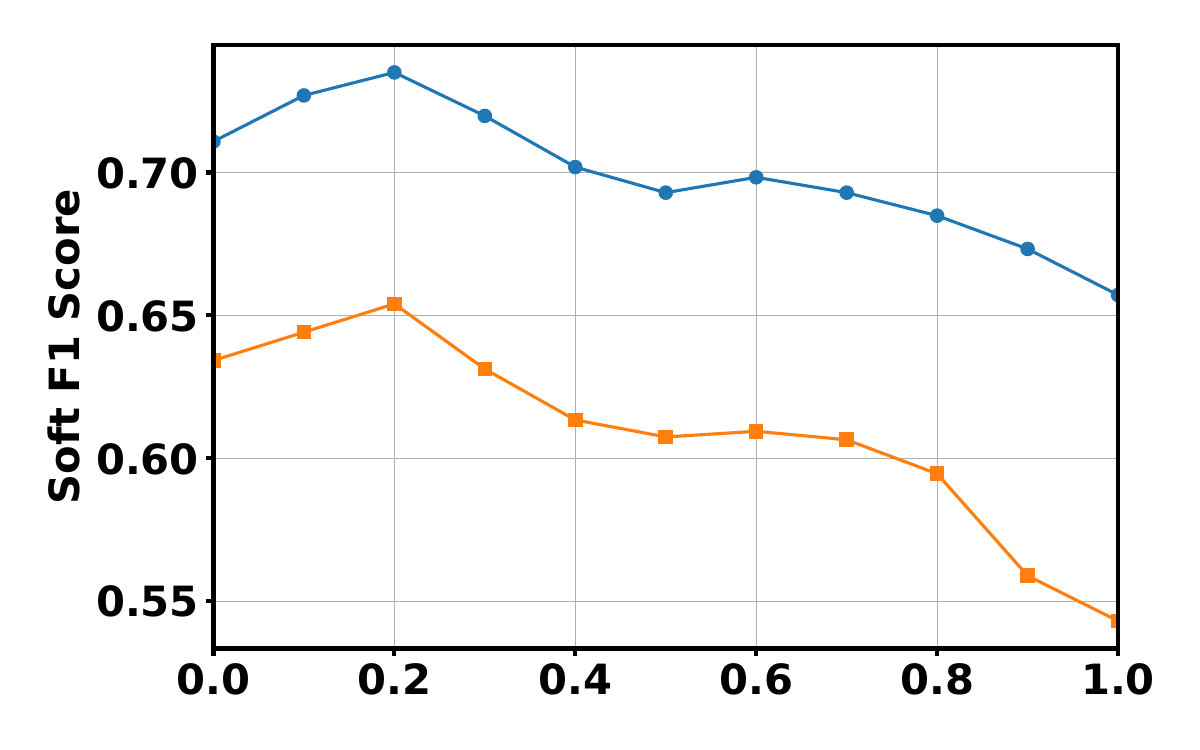}
    \caption{$\lambda_{\mathrm{plan}}$}
    \label{fig:hparam_plan}
  \end{subfigure}
  \hfill
  \begin{subfigure}[t]{0.245\columnwidth}
    \centering
    \includegraphics[width=\linewidth]{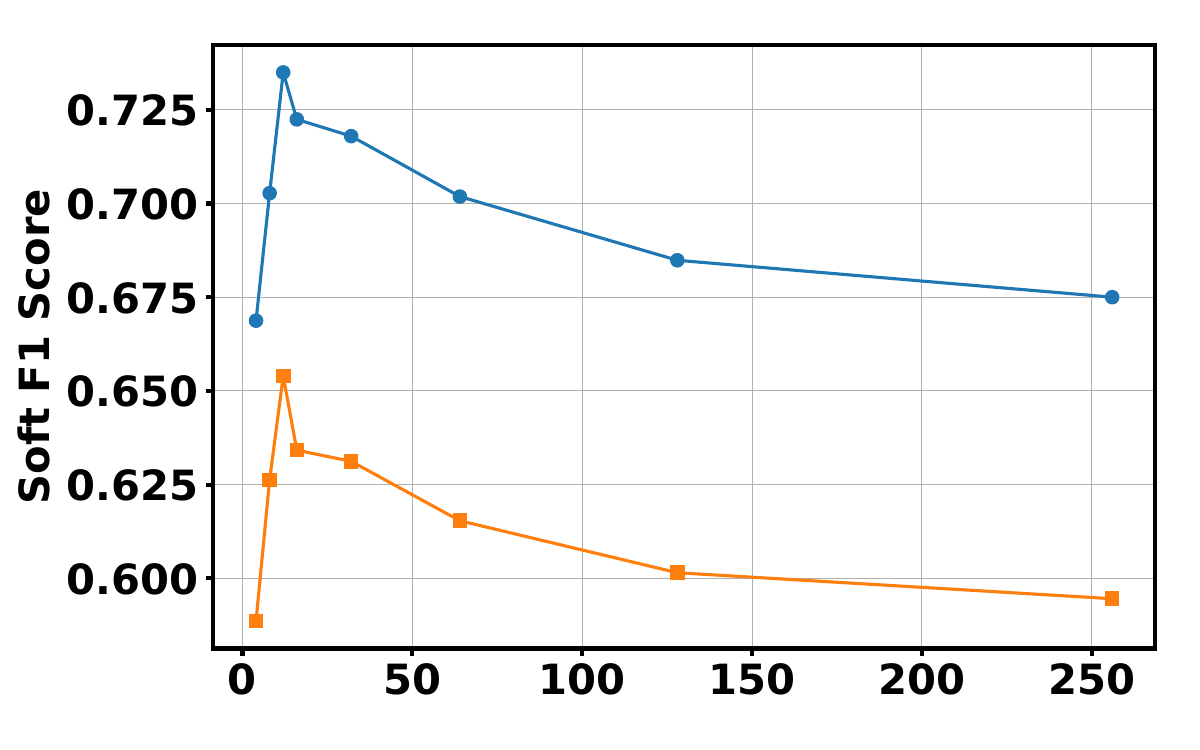}
    \caption{$d$}
    \label{fig:hparam_d}
  \end{subfigure}

  \caption{Hyper-parameter sensitivity on taxonomy quality.}
  \label{fig:hparam}
\end{figure}

\subsection{Ablation Study}
\label{sec:sharper_ablation}
Beyond the configuration-level ablation analysis in Sec.~\ref{sec:main_results}, we conduct two finer-grained ablations that isolate individual design choices: the graph-selection objective (Table~\ref{tab:abl_reliability}) and the budget-aware planner objective (Table~\ref{tab:abl_planner}). (1) \emph{For the graph-selection objective}, coverage-only selection ($\lambda_{\mathrm{rel}}{=}0$) admits noisy partial taxonomies and reduces Final Node F1 by $8.2\%$ relative to the full model, whereas reliability-only selection (removing $f_{\text{cov}}$) over-prunes the candidate graph and reduces it by $5.6\%$; since $f_{\text{cov}}$ optimizes for recall while $f_{\text{rel}}$ optimizes for precision, the two terms are anti-correlated by design, and only their joint optimization recovers the reported $73.5$/$65.4$ Final Node/Edge F1. (2) \emph{For the planner objective}, utility-only planning increases token consumption by $186\%$ (to $1426$k) while delivering a Node F1 that is $9.4\%$ lower than the full model; adding the prior $\Pi^{\text{prior}}$ alone yields a $2.7\%$ relative improvement in Node F1 over utility-only planning but leaves token usage unconstrained, whereas adding the budget constraint alone limits tokens to $542$k and yields a $4.8\%$ relative improvement. Only the complete three-term formulation simultaneously attains high quality ($73.5\%$ Node F1) and tight cost ($498$k tokens), confirming that the five components---$f_{\text{cov}}$, $f_{\text{rel}}$, $U_h$, $\Pi^{\text{prior}}$, and the budget constraint---are jointly indispensable.

\subsection{Diagnostic Studies}
\label{sec:robustness_studies}

\textbf{Robustness of Related Work Detection and Extraction.} We verify two properties of the partial taxonomy extraction pipeline: (i) the coverage of the regex-based detector, and (ii) the accuracy of the LLM-based extractor in recovering the author-curated partial taxonomy from the detected sections. We randomly sample 40 papers from the arXiv benchmark and manually annotate each with its \emph{ground-truth Related Work span} and \emph{ground-truth partial taxonomy}, against which we report detector \emph{coverage} and \emph{false-positive rate}, as well as extractor Node Soft F1 and Edge Soft F1. The regex detector achieves 95\% coverage at a 0\% false-positive rate; the missed 5\% corresponds to 2 papers that contain no explicit Related Work section and are skipped entirely. This confirms that authors overwhelmingly organize related work in a dedicated section with an identifiable heading—a regularity the lightweight detector reliably exploits. The LLM extractor attains 94.7\% Node Soft F1 and 93.5\% Edge Soft F1 on the detected sections, indicating that the recovered partial taxonomies closely align with the manually annotated ground-truth structures and introduce only limited extraction noise into the subsequent taxonomy-integration stage.

\noindent\textbf{Numerical Stability under Deep Hierarchies.} We verify whether box volumes shrink sharply or remain numerically well-behaved as taxonomy depth grows. To this end, we measure the smallest learned box volume and the smallest per-dimension side length across all nodes on two topics: the LLM-based Text-to-SQL taxonomy (5 levels), one of the deepest topics in our benchmark, and an exceptionally deep 11-level taxonomy derived from a broad topic query of \emph{Machine learning}. To isolate the effect of the volume regularization, we compare two variants of the bidirectional mapping model under otherwise identical hyperparameters: \emph{w/o Reg.}, which removes the regularization term, and \emph{w/ Reg.}, our full model that penalizes any box whose volume falls below the threshold $V_{\min}$. 
As a result, without regularization both quantities collapse toward the FP32 underflow limit, and training diverges with NaN gradients on the deeper topic. With regularization, the smallest volumes remain at $1.2 \times 10^{-4}$ and $3.5 \times 10^{-5}$---roughly 33--34 orders of magnitude above the FP32 underflow limit ($\approx 10^{-38}$)---while the smallest side lengths stay at $0.37$ and $0.26$, indicating that inner boxes retain a meaningful spatial extent and the model converges stably in both settings.
\begin{table}[t]
\footnotesize
\begin{minipage}[t]{0.55\linewidth}
\centering
\setlength{\tabcolsep}{3pt}
\caption{Ablation on the reliability--coverage decomposition.}
\label{tab:abl_reliability}
\begin{tabular}{@{}l cc cc@{}}
\toprule
& \multicolumn{2}{c}{Node F1} & \multicolumn{2}{c}{Edge F1} \\
\cmidrule(lr){2-3} \cmidrule(lr){4-5}
Variant & Init & Final & Init & Final \\
\midrule
$f_{\text{cov}}$ only ($\lambda_{\mathrm{rel}}{=}0$)  & 41.7 & 67.5 & 35.7 & 58.9 \\
$f_{\text{rel}}$ only (no coverage)                  & 43.4 & 69.4 & 37.3 & 60.8 \\
\midrule
Full (\method)                                        & \textbf{47.5} & \textbf{73.5} & \textbf{40.8} & \textbf{65.4} \\
\bottomrule
\end{tabular}
\end{minipage}\hfill
\begin{minipage}[t]{0.43\linewidth}
\centering
\setlength{\tabcolsep}{3pt}
\caption{Ablation on the budget-aware planner objective.}
\label{tab:abl_planner}
\begin{tabular}{@{}l c c c@{}}
\toprule
Variant & \begin{tabular}[c]{@{}c@{}}Final\\Node F1\end{tabular} & \begin{tabular}[c]{@{}c@{}}Tokens\\(k)\end{tabular} & \begin{tabular}[c]{@{}c@{}}Cost\\(\$)\end{tabular} \\
\midrule
$U_h$ only                 & 66.6 & 1426 & 4.27 \\
$U_h + \Pi^{\text{prior}}$ & 68.4 & 1403 & 4.20 \\
$U_h + {}$budget           & 69.8 & 542  & 1.62 \\
\midrule
Full (\method)             & \textbf{73.5} & \textbf{498} & \textbf{1.49} \\
\bottomrule
\end{tabular}
\end{minipage}
\end{table}

\noindent\textbf{Predictive Concept Precision.} To verify whether the void volumes identified by our concept emergence model correspond to plausible emergence regions, we measure \emph{Predictive Concept Precision}. For each timestamp transition $t \to t{+}1$, we collect \emph{all} hypothesized concepts used for retrieval queries under each refinement context in $\mathcal{T}_t$, identify \emph{all} newly emerging concepts at $t{+}1$, and compute the fraction of hypothesized concepts that are matched by at least one newly emerging concept under the soft-matching criterion. We report precision rather than recall because ground-truth concept sets at intermediate timestamps are not available in our benchmark, which provides ground-truth taxonomies only at the final timestamp. Averaged over six transitions and 12 benchmark topics, \method achieves 54.2\% precision. The precision is also relatively consistent across the three refinement contexts, with 48.5\%, 55.4\%, and 58.7\% for Specialization, Abstraction, and Expansion, respectively. These results suggest that the geometric gaps provide a useful soft prior for identifying plausible emergence regions. Unrecovered concepts are handled by the budget-aware retrieval module, as hypothesized concepts are used only as a soft prior rather than a hard filter.

\subsection{Case Study: LLM-based Text-to-SQL}
\label{sec:case_llm2sql}

We present a qualitative case study on \emph{LLM-based Text-to-SQL}, which has one of the deepest and most fine-grained ground-truth taxonomies in our benchmark.

\noindent\textbf{Taxonomy Evolution.}
Table~\ref{tab:case_study_taxonomy} tracks the taxonomy's evolution over time.
Initially ($t{=}1$), the structure captures a coarse pipeline with broad subtopics like consistency post-processing and few-shot prompting.
By the middle stage ($t{=}4$), it becomes significantly more granular, separating \emph{Schema Linking} from general pre-processing and refining inference workflows into \emph{Chain-of-Thought} and \emph{Decomposition}, reflecting the 2023--2024 literature.
Finally ($t{=}7$), it captures emerging trends difficult to surface via keyword search, such as \emph{Autonomous Agents}.


\begin{table*}[t]
  \centering
  \footnotesize
  \setlength{\tabcolsep}{4pt}
  \renewcommand{\arraystretch}{1.2}

  \caption{Representative taxonomy evolution stages.}\label{tab:case_study_taxonomy}
  \begin{tabular}{@{}l p{0.52\linewidth} p{0.36\linewidth}@{}}
    \toprule
    \textbf{Stage} & \textbf{Representative New Concepts} & \textbf{Example Evidence Papers} \\
    \midrule
    $t{=}1$ &
    Post-processing $\rightarrow$ Consistency Method; Few-shot $\rightarrow$ Replacing the task description; Model Evaluation &
    BINDER: \emph{Binding language models in symbolic languages} (2022) \\
    \midrule
    $t{=}4$ &
    Pre-processing $\rightarrow$ Schema Linking (Traditional vs.\ LLM-based); Inference $\rightarrow$ Workflow (CoT / Decomposition) &
    ACT-SQL (2023); DIN-SQL (2023) \\
    \midrule
    $t{=}7$ &
    Workflow $\rightarrow$ Autonomous Agents; richer consistency post-processing &
    \emph{Spider 2.0} (2024); R3 (2024); MCS-SQL (2024) \\
    \bottomrule
  \end{tabular}

  \vspace{1.5ex}  

  \caption{Representative paper search results.}\label{tab:case_study_paper_search}
  \begin{tabular}{@{}p{0.30\linewidth} p{0.60\linewidth}@{}}
    \toprule
    \textbf{Concept \& Query} & \textbf{Top-5 Results} \\
    \midrule
    \textbf{Chain-of-Thought (CoT)} \newline
    \emph{``chain-of-thought prompting for text-to-SQL with correction''} &
    \underline{Flat-Ret:} \textbf{\emph{Divide and prompt}} (2023); \emph{Exploring chain of thought style prompting...} (2023); \textbf{ACT-SQL (2023)}; \emph{Text-to-SQL empowered...} (2023) \newline
    \underline{GIST:} \textbf{\emph{Divide and prompt}} (2023); \textbf{ACT-SQL (2023)}; \textbf{CHESS (2024)}; \textbf{CoE-SQL (2024)}; \textbf{OpenSearch-SQL (2025)} \\
    \midrule
    \textbf{Traditional Schema Linking} \newline
    \emph{``schema linking with skeleton retrieval / de-semanticization''} &
    \underline{Flat-Ret:} \textbf{\emph{De-semanticization...}} (2023); \textbf{DBCopilot (2023)}; C3 (2023); \emph{OpenSQL Framework} (2024); \emph{RSL-SQL} (2024) \newline
    \underline{GIST:} \textbf{\emph{De-semanticization...}} (2023); \textbf{CRUSH4SQL (2023)}; \textbf{PURPLE (2024)}; \textbf{SGU-SQL (2024)}; \textbf{OpenSearch-SQL (2025)} \\
    \bottomrule
  \end{tabular}
\end{table*}

\noindent\textbf{Taxonomy-Indexed Search.}
We demonstrate the taxonomy's value as a semantic index in Table~\ref{tab:case_study_paper_search}.
User queries are first aligned to the most relevant concept node, restricting retrieval to that concept's subtree. This minimizes topic drift for overloaded terms (e.g., ``consistency'').
As shown, \method concentrates ranking on ground-truth relevant papers (bolded) more effectively than flat retrieval (\emph{Flat-Ret}), which often retrieves tangentially related work.





\section{Related Work}\label{sec:related_work}

\noindent\textbf{Knowledge Graph Maintenance.}
Maintaining structured knowledge under continuous evolution is a long-standing data management problem~\cite{Weikum21}. Prior efforts can be broadly organized into three categories.
\emph{(i) Incremental knowledge-base construction.} Systems such as DeepDive~\cite{Shin2015DeepDive} and Fonduer~\cite{Wu2018Fonduer} extract entities and relations from newly arriving documents to grow a knowledge base over time.
\emph{(ii) Entity resolution and schema alignment.} A second line consolidates equivalent records across heterogeneous sources through learned matchers and unified resolution frameworks~\cite{li2020deep,tu2023unicorn,huang2023er1,galhotra2018er2}.
\emph{(iii) Continual learning under distribution drift.} Methods such as Elastic Weight Consolidation (EWC)~\cite{kirkpatrick2017catastrophic} adapt model parameters to shifting input distributions, typically guarding against catastrophic forgetting via regularization or rehearsal.
These categories share a common problem setting: the underlying schema is treated as \emph{fixed}, and maintenance proceeds by growing coverage, consolidating duplicates, or adapting parameters within it. \method{} instead targets a setting in which the schema itself---the concept hierarchy---evolves through expansion, abstraction, and specialization. This difference introduces two new challenges: (i) the supervision signal drifts with the hierarchy, so past training signals cannot be naively replayed as in rehearsal-based continual learning; and (ii) maintenance requires coupled growth and structural revision over concepts, rather than insertion under a fixed schema or instance-level record matching.

\noindent\textbf{Taxonomy Construction.} Automated taxonomy construction generally falls into two paradigms. \textit{Entity-driven} methods expand or complete existing structures by predicting concept placement~\cite{Shen2020TaxoExpan,Jiang2022TaxoEnrich,Sun2024TaxoGlimpse,taxonomy_method,Mao2018TaxoRL}, recently incorporating LLMs for low-resource tasks~\cite{Mishra2024FLAME}. Conversely, \textit{document-driven} methods induce hierarchies directly from corpora via clustering~\cite{Zhang2018TaxoGen}, seed-guided growth~\cite{Shen2019HiExpan}, or emerging LLM-based prompting~\cite{Zeng2024ChainOfLayer,kargupta2025taxoadapt,lahiri2025taxoalign,zhu2025context,lan2026agenticscholar}.
These methods treat taxonomy construction as a generation task over fixed, small corpora, leaving LLM-based variants prone to structural hallucination. We instead target \problem in a dynamic, in-the-wild setting and reframe it as a \emph{structural integration} problem: LLMs are restricted to extracting author-curated hierarchical signals from \emph{Related Work} sections, while reliability-aware integration, hypothesized concept prediction from geometric voids, and budget-aware retrieval jointly address the efficiency and cost challenges of streaming paper arrivals.

\noindent\textbf{Geometric Embeddings for Hierarchical Relations.}
To capture the asymmetry of hierarchical relations, researchers have explored non-Euclidean geometries, including 1) hyperbolic models~\cite{Nickel2017Poincare,Ganea2018EntailmentCones,ma2021hyperexpan}, which embed concepts as points in a negatively curved space whose exponentially expanding volume compactly accommodates tree-like structure; 2) order embeddings~\cite{Vendrov2016Order,Lai2017DenotationalProbabilities,Athiwaratkun2018HierarchicalDensity}, which model is-a as a partial order, originally via coordinate-wise dominance and subsequently extended to probabilistic and density-based formulations; and 3) \textit{box embeddings}~\cite{Vilnis2018Box,Li2019SmoothingBox,Dasgupta2020BoxLocal,Abboud2020BoxE}, which represent each concept as an axis-aligned hyperrectangle so that is-a relations correspond to spatial inclusion between boxes. In this work, we choose box embeddings as they provide an explicit geometric prior in which the void volumes between boxes furnish a signal of where new concepts can plausibly emerge. Neither property is available from hyperbolic or order-based representations. Furthermore, extending the existing box-embedding framework, we introduce two components tailored to in-the-wild taxonomy maintenance. First, we propose a novel bidirectional mapping model to bridge the semantic word-embedding space and the geometric box-embedding space. Second, we introduce a geometric concept emergence model that predicts emerging concepts to guide evidence retrieval.

\noindent\textbf{Continual Learning.}
Continual learning aims to update models over a non-stationary data stream without catastrophic forgetting~\cite{kirkpatrick2017catastrophic}, and the literature broadly clusters into three families: \emph{regularization-based} methods that anchor parameters to past optima via importance-weighted penalties (e.g., EWC~\cite{kirkpatrick2017catastrophic}, SI~\cite{zenke2017si}, MAS~\cite{aljundi2018mas}); \emph{replay-based} methods that maintain a small memory of past examples or generative surrogates and rehearse them alongside new data (e.g., iCaRL~\cite{rebuffi2017icarl}, GEM/A-GEM~\cite{lopezpaz2017gem,chaudhry2019agem}, ER~\cite{chaudhry2019er}); and \emph{architecture-based} methods that allocate or isolate sub-networks per task (e.g., Progressive Networks~\cite{rusu2016progressive}, PackNet~\cite{mallya2018packnet}). These approaches share a common operating assumption: the \emph{label space} or task structure is externally specified, and the goal is to preserve model behavior on previously seen tasks while accommodating new ones~\cite{delange2021survey}. \method{} departs from this setting in two ways. First, the supervision signal in our problem is itself drifting: the taxonomy that defines parent--child relations grows, abstracts, and specializes over time, so there is no fixed task identity to anchor regularization or rehearsal against. Second, our novelty-aware coreset selection is principled as an importance-reweighted approximation of the historical empirical risk (Sec.~\ref{sec:coreset_update}), and it explicitly trades off ERM fidelity against a novelty score that quantifies complementarity with the incoming batch—yielding a memory that is both \emph{representative} of historical structural signals and \emph{non-redundant} with newly extracted ones, a property that standard reservoir or random-sampling replay buffers do not provide.

\noindent\textbf{Data Integration.}
Data integration is a fundamental problem that arises across diverse data modalities, including tabular data~\cite{christen2012datamatching, Konda2016Magellan, li2020deep, tu2023unicorn, ji2025table, lee2026salto}, graph data~\cite{Suchanek2008YAGO, Auer2007DBpedia, Weikum21, huang2023er1}, and text corpora~\cite{Shin2015DeepDive, Wu2018Fonduer}. Our work is most closely related to the integration of graph-structured data, as it dynamically and incrementally consolidates multiple partial taxonomies into a unified, evolving taxonomy. For \emph{graph data}, integration centers on aligning and fusing heterogeneous graph-structured sources. Representative efforts include large-scale knowledge graph construction that merges Wikipedia, WordNet, and other sources into unified ontologies~\cite{Suchanek2008YAGO, Auer2007DBpedia, Weikum21}, as well as entity alignment across knowledge graphs, where embedding-based and active learning methods jointly align entities and schemata~\cite{huang2023er1}. A related direction leverages integrated graph structures for downstream retrieval and reasoning, such as retrieval-augmented generation over textual graphs~\cite{He2024GRetriever} and hierarchical, query-focused summarization over entity-relation graphs~\cite{edge2024local}. Unlike conventional graph integration methods that align and fuse a fixed collection of pre-existing graphs, GIST continuously incorporates partial taxonomies extracted from newly arriving documents and revises the global concept hierarchy as the underlying scholarly repository evolves.

\section{Conclusion}
\label{sec:conclusion}

We address \problem over evolving scholarly repositories. We propose \method, which leverages human-curated partial taxonomies as reliable signals and integrates them via geometric containment relationships, together with a probabilistic hypothesized concept generation module and a cost-effective evidence retrieval module. Experiments on the real-world arXiv corpus (ar5iv) across 12 topics show that \method achieves strong taxonomy quality with substantially lower runtime and monetary cost than LLM-driven baselines.

\begin{acks}
Zhifeng Bao is supported in part by ARC DP240101211 and FT240100832. Junhao Gan is supported in part by ARC DP230102908.
\end{acks}


\bibliographystyle{ACM-Reference-Format}
\bibliography{reference}

\clearpage

\section*{Appendix}

\appendix
\section{Proofs}

\subsection{Proof of Submodularity for Eq.~\eqref{eq:cover_rel}}~\label{sec:proof_submodularity_f1}
\begin{lemma}[Submodularity of the coverage term]\label{lem:submod_cov}
The function $f_{\text{cov}}:2^{\mathcal{E}_{\text{cand}}}\to\mathbb{R}_{\ge 0}$ is monotone and submodular.
\end{lemma}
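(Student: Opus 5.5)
The plan is to recognize $f_{\text{cov}}$ as a normalized instance of the classical set-cover (coverage) function. I would then verify monotonicity and diminishing returns directly from set-theoretic identities.

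\textbf{Setup.} For each candidate edge $e=(u,v)\in\mathcal{E}_{\text{cand}}$, let $S_e=\{u,v\}\subseteq\mathcal{V}_{\text{cand}}$ be its endpoint set. Then for any $\mathcal{D}\subseteq\mathcal{E}_{\text{cand}}$ we have $\text{Cov}(\mathcal{D})=\bigcup_{e\in\mathcal{D}}S_e$. Hence
\[
f_{\text{cov}}(\mathcal{D})=\frac{1}{|\mathcal{V}_{\text{cand}}|}\Bigl|\bigcup_{e\in\mathcal{D}}S_e\Bigr|.
\]
Non-negativity is immediate, and $f_{\text{cov}}(\emptyset)=0$. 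Since the factor $1/|\mathcal{V}_{\text{cand}}|$ is a positive constant, it preserves both monotonicity and submodularity. It therefore suffices to treat the unnormalized map $g(\mathcal{D})=|\text{Cov}(\mathcal{D})|$.

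\textbf{Monotonicity.} If $\mathcal{A}\subseteq\mathcal{B}$, then $\text{Cov}(\mathcal{A})\subseteq\text{Cov}(\mathcal{B})$, because a union over more sets can only grow. Cardinality is monotone under inclusion, so $g(\mathcal{A})\le g(\mathcal{B})$.

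\textbf{Submodularity.} I would work with the marginal gain. For $e\notin\mathcal{B}$,
\[
\Delta g(e\mid\mathcal{D})=\bigl|S_e\setminus\text{Cov}(\mathcal{D})\bigr|,
\]
which is the number of endpoints of $e$ not yet covered. Take $\mathcal{A}\subseteq\mathcal{B}$. We have $\text{Cov}(\mathcal{A})\subseteq\text{Cov}(\mathcal{B})$, so $S_e\setminus\text{Cov}(\mathcal{B})\subseteq S_e\setminus\text{Cov}(\mathcal{A})$. This gives
\[
\Delta g(e\mid\mathcal{A})\ge\Delta g(e\mid\mathcal{B}),
\]
which is the diminishing-returns characterization of submodularity. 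For $e\in\mathcal{B}$, the gain over $\mathcal{B}$ is zero and the inequality is trivial.

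There is no real obstacle here. The only step needing care is stating that the marginal-gain identity follows from $\text{Cov}(\mathcal{D}\cup\{e\})=\text{Cov}(\mathcal{D})\cup S_e$ together with $|X\cup Y|=|X|+|Y\setminus X|$. This lemma then combines with the modularity of $f_{\text{rel}}$ (nonnegative weights $\tilde w_{uv}\ge0$) to give the claimed properties of $F$ in Eq.~\eqref{eq:cover_rel}.
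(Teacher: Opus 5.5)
Your proof is correct and follows the same route as the paper: write $f_{\text{cov}}$ as a positive rescaling of the set-union function $g(\mathcal{D})=\bigl|\bigcup_{e\in\mathcal{D}}S_e\bigr|$, which is monotone submodular. The paper simply cites that standard fact, whereas you verify it directly through the marginal gain $|S_e\setminus\text{Cov}(\mathcal{D})|$; you also state explicitly that $S_e=\{u,v\}$, which the paper leaves undefined.
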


\begin{proof}
Let $g(\mathcal{D}) := \bigl|\bigcup_{e\in\mathcal{D}} S_e\bigr|$. This is a standard \emph{coverage} (set-union) function and is monotone submodular~\cite{krause2014submodular}. Since $f_{\text{cov}}(\mathcal{D}) = g(\mathcal{D})/|\mathcal{V}_{\text{cand}}|$ is just a positive scaling of $g$, $f_{\text{cov}}$ remains monotone and submodular.
\end{proof}

\begin{lemma}[Modularity of the reliability term]\label{lem:mod_rel}
The function $f_{\text{rel}}:2^{\mathcal{E}_{\text{cand}}}\to\mathbb{R}$ is modular (hence submodular). Moreover, since $\tilde{w}_e\in[0,1]$, it is monotone.
\end{lemma}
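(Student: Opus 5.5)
The plan is to write $f_{\text{rel}}$ explicitly as a sum of per-element weights and read off both properties directly from that form. By definition,
\[
f_{\text{rel}}(\mathcal{D}) = \sum_{e\in\mathcal{D}} c_e, \qquad c_e := \frac{\tilde{w}_e}{K_{\mathrm{rel}}},
\]
where $K_{\mathrm{rel}}$ is a fixed positive constant that does not depend on $\mathcal{D}$. We may assume $K_{\mathrm{rel}}\ge 1$; otherwise the feasible family contains only $\emptyset$ and the statement is vacuous for the optimization. A set function of this form, a sum of fixed element weights with $f(\emptyset)=0$, is modular by definition.

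\emph{Modularity.} For any $\mathcal{A},\mathcal{B}\subseteq\mathcal{E}_{\text{cand}}$, split the sums over $\mathcal{A}\cup\mathcal{B}$ and $\mathcal{A}\cap\mathcal{B}$ elementwise. This gives
\[
f_{\text{rel}}(\mathcal{A})+f_{\text{rel}}(\mathcal{B}) = f_{\text{rel}}(\mathcal{A}\cup\mathcal{B})+f_{\text{rel}}(\mathcal{A}\cap\mathcal{B}).
\]
Equivalently, the marginal gain $\Delta f_{\text{rel}}(e\mid\mathcal{D}) = c_e$ for every $\mathcal{D}$ and every $e\notin\mathcal{D}$, independent of $\mathcal{D}$. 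The submodular inequality $\Delta(e\mid\mathcal{A})\ge\Delta(e\mid\mathcal{B})$ for $\mathcal{A}\subseteq\mathcal{B}$ then holds with equality, so $f_{\text{rel}}$ is submodular.

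\emph{Monotonicity.} It only remains to check that each $c_e\ge 0$. Because $w_{uv}\ge 1$ counts supporting papers and $w_{\max}\ge w_{uv}$, we have $0<\log(1+w_{uv})\le\log(1+w_{\max})$, so $\tilde{w}_e\in(0,1]\subseteq[0,1]$. Thus $c_e\ge 0$, and for $\mathcal{A}\subseteq\mathcal{B}$ we get $f_{\text{rel}}(\mathcal{B})-f_{\text{rel}}(\mathcal{A})=\sum_{e\in\mathcal{B}\setminus\mathcal{A}}c_e\ge 0$.

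The main obstacle is not the algebra but making sure the normalizer is genuinely set-independent and that $w_{\max}>0$, so that $\tilde{w}$ is well defined. I will state that $w_{\max}:=\max_{(u,v)\in\mathcal{E}_{\text{cand}}} w_{uv}\ge 1$ whenever $\mathcal{E}_{\text{cand}}\neq\emptyset$, and that $K_{\mathrm{rel}}$ is fixed before optimization. If instead the normalizer were $|\mathcal{D}|$ (an average), modularity would fail, so this point is worth flagging explicitly. Combining with Lemma~\ref{lem:submod_cov} then gives that $F=f_{\text{cov}}+\lambda_{\text{rel}}f_{\text{rel}}$ with $\lambda_{\text{rel}}\ge 0$ is non-negative, monotone and submodular, as claimed in the main text.
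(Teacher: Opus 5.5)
Your proposal is correct and follows the paper's argument: $f_{\text{rel}}$ is a fixed weighted sum whose marginal gain $\tilde{w}_e/K_{\mathrm{rel}}$ does not depend on $\mathcal{D}$, which gives modularity and hence submodularity, and nonnegativity of $\tilde{w}_e$ gives monotonicity. Your extra checks, namely that $K_{\mathrm{rel}}$ is positive and independent of $\mathcal{D}$ and that $w_{\max}\ge 1$ so $\tilde{w}_e$ is well defined, are sound and simply make explicit what the paper's one-line proof assumes.
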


\begin{proof}
$f_{\text{rel}}(\mathcal{D})=\frac{1}{K}\sum_{e\in\mathcal{D}}\tilde{w}_e$ is a weighted sum over selected elements, hence modular (its marginal gain for adding $e$ is always $\tilde{w}_e/K$, independent of $\mathcal{D}$). Since $\tilde{w}_e\ge 0$, it is also monotone.
\end{proof}

\begin{proposition}[Submodularity of $F$]\label{prop:submod_F}
The objective $F(\mathcal{D}) = f_{\text{cov}}(\mathcal{D}) + \lambda_{\text{rel}} f_{\text{rel}}(\mathcal{D})$ is submodular for any $\lambda_{\text{rel}}\in\mathbb{R}$. If $\lambda_{\text{rel}}\ge 0$, then $F$ is also monotone and non-negative.
\end{proposition}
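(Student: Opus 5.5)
The proof combines Lemma~\ref{lem:submod_cov} and Lemma~\ref{lem:mod_rel} using the diminishing-returns characterization of submodularity. For $\mathcal{A}\subseteq\mathcal{B}\subseteq\mathcal{E}_{\text{cand}}$ and $e\notin\mathcal{B}$, write the marginal gain as $\Delta F(e\mid\mathcal{A}) = F(\mathcal{A}\cup\{e\})-F(\mathcal{A})$. This gain splits additively as $\Delta f_{\text{cov}}(e\mid\mathcal{A}) + \lambda_{\text{rel}}\,\Delta f_{\text{rel}}(e\mid\mathcal{A})$.

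\textbf{Submodularity.} By Lemma~\ref{lem:mod_rel}, the reliability term has constant marginal gain $\tilde{w}_e/K_{\mathrm{rel}}$, independent of the current set. Hence $\lambda_{\text{rel}}\,\Delta f_{\text{rel}}(e\mid\mathcal{A}) = \lambda_{\text{rel}}\,\Delta f_{\text{rel}}(e\mid\mathcal{B})$ for every real $\lambda_{\text{rel}}$, including negative values: a modular function multiplied by any scalar is still modular. By Lemma~\ref{lem:submod_cov}, $\Delta f_{\text{cov}}(e\mid\mathcal{A})\ge \Delta f_{\text{cov}}(e\mid\mathcal{B})$. Adding the two relations gives $\Delta F(e\mid\mathcal{A})\ge\Delta F(e\mid\mathcal{B})$, so $F$ is submodular for all $\lambda_{\text{rel}}\in\mathbb{R}$.

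\textbf{Monotonicity.} Assume $\lambda_{\text{rel}}\ge 0$. Lemma~\ref{lem:submod_cov} gives $\Delta f_{\text{cov}}(e\mid\mathcal{A})\ge 0$. Since $\tilde{w}_e\ge 0$, the term $\lambda_{\text{rel}}\tilde{w}_e/K_{\mathrm{rel}}$ is also nonnegative. Therefore every marginal gain of $F$ is nonnegative, and $F$ is monotone.

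\textbf{Non-negativity.} Each term is nonnegative: $f_{\text{cov}}$ is a ratio of cardinalities, and $f_{\text{rel}}$ is a nonnegative sum whenever $K_{\mathrm{rel}}>0$. In addition, $F(\emptyset)=0$.

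\textbf{Main obstacle.} The argument is essentially routine. The one point to check is that $\tilde{w}_e\in[0,1]$, so that the reliability gain is genuinely nonnegative. This holds because $w_{uv}\ge 1$ and $\log(1+w_{uv})\le\log(1+w_{\max})$.

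A degenerate case also needs handling. If $K_{\mathrm{rel}}=0$, the normalization $1/K_{\mathrm{rel}}$ in $f_{\text{rel}}$ is undefined. In that case the feasible set is $\{\emptyset\}$ and the statement holds vacuously, so it suffices to note $K_{\mathrm{rel}}\ge 1$ whenever the problem is nontrivial. The $(1-1/e)$ greedy guarantee then follows from~\cite{nemhauser1978analysis}, since $F$ is nonnegative, monotone, submodular, and $F(\emptyset)=0$.
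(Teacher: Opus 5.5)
Your proof is correct and follows the paper's route: both combine Lemma~\ref{lem:submod_cov} and Lemma~\ref{lem:mod_rel} and use the fact that adding functions preserves submodularity, monotonicity and non-negativity; you simply unpack this through marginal gains. Your handling of negative $\lambda_{\text{rel}}$ is actually more careful than the paper's. The paper appeals to closure of submodularity under scalar multiplication, which holds only for nonnegative scalars, whereas you correctly note that any real multiple of a modular function is still modular.
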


\begin{proof}
Coverage functions are submodular (Lemma~\ref{lem:submod_cov}) and modular functions are submodular (Lemma~\ref{lem:mod_rel}). Submodularity is closed under addition and scalar multiplication~\cite{krause2014submodular}, so $F$ is submodular for any $\lambda_{\text{rel}}$. If $\lambda_{\text{rel}}\ge 0$, both terms are monotone and non-negative, hence so is $F$.
\end{proof}


\subsection{Proof for Proposition~\ref{prop:coreset_sampling}.}\label{sec:proof_coreset_sampling}

\begin{proof}
Let $\mathcal{D}_{\mathrm{old}}$ be the domain of edges, and write $\Pi(e)$ as shorthand for
$\Pi_{\text{core}}(e)$.
Eq.~\ref{eq:nov-kl} is
\[
\max_{\Pi \in \Delta(\mathcal{D}_{\mathrm{old}})}
\;-\mathrm{KL}\!\left(\Pi \,\|\, \Pi_{\mathrm{ERM}}\right)
\;+\;
\lambda_{\mathrm{nov}}\,\mathbb{E}_{e\sim \Pi}\!\left[n(e)\right].
\]
Expanding the two terms yields the equivalent objective
\[
\max_{\Pi \in \Delta(\mathcal{D}_{\mathrm{old}})}
\;\sum_{e}\Pi(e)\log\frac{\Pi_{\mathrm{ERM}}(e)}{\Pi(e)}
\;+\;
\lambda_{\mathrm{nov}}\sum_{e}\Pi(e)\,n(e).
\]
Since $\mathrm{KL}(\cdot\|\Pi_{\mathrm{ERM}})$ is strictly convex over the simplex
(on the support of $\Pi_{\mathrm{ERM}}$), the objective is strictly concave,
hence the maximizer is unique.

Introduce a Lagrange multiplier $\alpha\in\mathbb{R}$ for $\sum_e \Pi(e)=1$.
The Lagrangian is
\[
\mathcal{L}(\Pi,\alpha)
=
\sum_{e}\Pi(e)\log\frac{\Pi_{\mathrm{ERM}}(e)}{\Pi(e)}
+
\lambda_{\mathrm{nov}}\sum_{e}\Pi(e)\,n(e)
+
\alpha\Big(\sum_{e}\Pi(e)-1\Big).
\]
For any $e$ with $\Pi_{\mathrm{ERM}}(e)>0$, optimality implies $\Pi^*(e)>0$,
and the stationarity condition gives
\[
\frac{\partial \mathcal{L}}{\partial \Pi(e)}
=
\log \Pi_{\mathrm{ERM}}(e) - \log \Pi(e) - 1
+\lambda_{\mathrm{nov}}\,n(e)+\alpha
=0.
\]
Rearranging,
\[
\log \Pi(e)
=
\log \Pi_{\mathrm{ERM}}(e)
+\lambda_{\mathrm{nov}}\,n(e)
+(\alpha-1),
\]
and exponentiating yields
\[
\Pi^*(e)
=
\Pi_{\mathrm{ERM}}(e)\exp\!\big(\lambda_{\mathrm{nov}}\,n(e)\big)\cdot \exp(\alpha-1).
\]
Enforcing $\sum_e \Pi^*(e)=1$ gives the normalizer
\[
\exp(\alpha-1)
=
\Bigg[\sum_{e'} \Pi_{\mathrm{ERM}}(e')\exp\!\big(\lambda_{\mathrm{nov}}\,n(e')\big)\Bigg]^{-1}.
\]
Therefore,
\[
\Pi_{\text{core}}^{\ast}(e)
=
\frac{\Pi_{\text{ERM}}(e)\exp\!\big\{\lambda_{\mathrm{nov}}\,n(e)\big\}}
{\sum_{e'} \Pi_{\text{ERM}}(e')\exp\!\big\{\lambda_{\mathrm{nov}}\,n(e')\big\}}.
\]
For any $e$ with $\Pi_{\mathrm{ERM}}(e)=0$, any feasible $\Pi(e)>0$ makes
$\mathrm{KL}(\Pi\|\Pi_{\mathrm{ERM}})=+\infty$, so necessarily $\Pi^*(e)=0$,
which is also consistent with the expression above. This completes the proof.
\end{proof}

\subsection{Proof for Proposition~\ref{prop:coreset_weights}}\label{sec:proof_coreset_weights}
\begin{proof}
(i) Unbiasedness is standard importance sampling. Since $e\sim\Pi_{\mathrm{core}}$,
\begin{equation}
\begin{aligned}
\mathbb{E}\!\left[\frac{\Pi_{\mathrm{ERM}}(e)}{\Pi_{\mathrm{core}}(e)}\ell(e;\theta)\right]
&=\sum_e \Pi_{\mathrm{core}}(e)\frac{\Pi_{\mathrm{ERM}}(e)}{\Pi_{\mathrm{core}}(e)}\ell(e;\theta)\\
&=\sum_e \Pi_{\mathrm{ERM}}(e)\ell(e;\theta)=\mathcal{L}_{\mathrm{old}}(\theta),
\end{aligned}
\end{equation}
and linearity of expectation yields
$\mathbb{E}[\widehat{\mathcal{L}}_{\mathrm{old}}(\theta)]=\mathcal{L}_{\mathrm{old}}(\theta)$.

(ii) The self-normalized estimator can be written as
\[
\widetilde{\mathcal{L}}_{\mathrm{old}}(\theta)
=
\frac{\frac{1}{m}\sum_{i=1}^m X_i}{\frac{1}{m}\sum_{i=1}^m Y_i},
\quad
X_i:=\hat{w}_i\,\ell(e_i;\theta),\;\; Y_i:=\hat{w}_i.
\]
Under the stated integrability condition, the strong law of large numbers gives
$\frac{1}{m}\sum_i X_i \to \mathbb{E}[X_1]=\mathcal{L}_{\mathrm{old}}(\theta)$ and
$\frac{1}{m}\sum_i Y_i \to \mathbb{E}[Y_1]=\sum_e \Pi_{\mathrm{ERM}}(e)=1$ almost surely.
Thus the ratio converges to $\mathcal{L}_{\mathrm{old}}(\theta)$ almost surely.
This is the standard consistency result for self-normalized importance sampling;
see, e.g., \cite{robert2004montecarlo}.
\end{proof}

\subsection{Proof of proposition~\ref{pro:exploration_policy}.}
\label{sec:proof_of_constrained_problem}

\begin{proof}
\noindent\textbf{Step 1: Existence and uniqueness.}
Let $S:=\{h:\Pi_{t-1,h}^{\mathrm{prior}}>0\}$ be the support of the prior.
Any feasible $\Pi_t$ with $\Pi_{t,h}>0$ for some $h\notin S$ yields
$\mathrm{KL}(\Pi_t\|\Pi_{t-1}^{\mathrm{prior}})=+\infty$ and is therefore never optimal.
Hence it suffices to optimize over
$\{\Pi_t\in\Delta^{H_{t-1}}:\Pi_{t,h}=0\ \forall h\notin S\}$.
On this domain, $\mathrm{KL}(\cdot\|\Pi_{t-1}^{\mathrm{prior}})$ is strictly convex in $\Pi_t$ (on the relative interior),
so the objective is strictly concave; the feasible set is closed and bounded (and nonempty under budget feasibility),
thus an optimum exists and is unique.

\noindent\textbf{Step 2: KKT conditions and closed form.}
Introduce Lagrange multipliers $\alpha\in\mathbb{R}$ for $\sum_h\Pi_{t,h}=1$ and $\nu\ge 0$ for the budget constraint.
The Lagrangian is
\[
\begin{aligned}
\mathcal{L}(\Pi_t,\alpha,\nu)
&=\sum_{h}\Pi_{t,h}U_h
-\lambda_{\mathrm{plan}}\sum_{h}\Pi_{t,h}\log\frac{\Pi_{t,h}}{\Pi_{t-1,h}^{\mathrm{prior}}} \\
&\quad-\alpha\Big(\sum_h\Pi_{t,h}-1\Big)
-\nu\Big(M_t\sum_h\Pi_{t,h}\bar{\ell}_h-K_t\Big).
\end{aligned}
\]
For any $h\in S$, the optimal solution satisfies $\Pi_{t,h}^*>0$, and the stationarity condition
$\partial\mathcal{L}/\partial\Pi_{t,h}=0$ gives
\[
U_h
-\lambda_{\mathrm{plan}}\Big(\log\frac{\Pi_{t,h}^*}{\Pi_{t-1,h}^{\mathrm{prior}}}+1\Big)
-\alpha
-\nu M_t\bar{\ell}_h
=0.
\]
Rearranging,
\[
\begin{aligned}
\log\frac{\Pi_{t,h}^*}{\Pi_{t-1,h}^{\mathrm{prior}}}
&=\frac{U_h-\nu M_t\bar{\ell}_h-\alpha-\lambda_{\mathrm{plan}}}{\lambda_{\mathrm{plan}}},\\
\Pi_{t,h}^*
&=\Pi_{t-1,h}^{\mathrm{prior}}
\exp\!\Big(\frac{U_h-\nu M_t\bar{\ell}_h}{\lambda_{\mathrm{plan}}}\Big)\cdot C,
\end{aligned}
\]
where $C=\exp\!\big(-(\alpha+\lambda_{\mathrm{plan}})/\lambda_{\mathrm{plan}}\big)$ is constant in $h$.
Enforcing $\sum_h\Pi_{t,h}^*=1$ yields
\[
C
=\Bigg[\sum_{h'=1}^{H_{t-1}}
\Pi_{t-1,h'}^{\mathrm{prior}}
\exp\!\Big(\frac{U_{h'}-\nu M_t\bar{\ell}_{h'}}{\lambda_{\mathrm{plan}}}\Big)\Bigg]^{-1}.
\]
Define $\lambda:=\nu M_t\ge 0$ and
\[
Z(\lambda)
=\sum_{h'=1}^{H_{t-1}}
\Pi_{t-1,h'}^{\mathrm{prior}}
\exp\!\Big(\frac{U_{h'}-\lambda \bar{\ell}_{h'}}{\lambda_{\mathrm{plan}}}\Big).
\]
Then, for all $h$ (including $h\notin S$, where $\Pi_{t-1,h}^{\mathrm{prior}}=0$ implies $\Pi_{t,h}^*=0$),
\[
\Pi_{t,h}^{*}
=\frac{\Pi_{t-1,h}^{\mathrm{prior}}
\exp\!\big((U_h-\lambda \bar{\ell}_h)/\lambda_{\mathrm{plan}}\big)}{Z(\lambda)}.
\]
This matches the claimed Gibbs form for some optimal dual multiplier $\lambda^*\ge 0$.

\noindent\textbf{Step 3: Determining $\lambda^*$ and bisection.}
The remaining KKT conditions are primal feasibility and complementary slackness:
\[
\begin{aligned}
M_t\sum_{h=1}^{H_{t-1}}\Pi_{t,h}^{*}\bar{\ell}_h &\le K_t,\\
\nu\Big(M_t\sum_{h=1}^{H_{t-1}}\Pi_{t,h}^{*}\bar{\ell}_h-K_t\Big) &= 0,
\qquad \nu\ge 0.
\end{aligned}
\]
Equivalently, either the budget is inactive and $\lambda^*=\nu M_t=0$, or it is active and
\[
M_t\sum_{h=1}^{H_{t-1}}\Pi_{t,h}(\lambda^*)\,\bar{\ell}_h = K_t,
\]
where $\Pi_{t,h}(\lambda)$ denotes the Gibbs distribution above.

Let
\[
c(\lambda):=\sum_{h=1}^{H_{t-1}}\Pi_{t,h}(\lambda)\bar{\ell}_h.
\]
A standard exponential-family calculation yields that $c(\lambda)$ is continuous and nonincreasing:
\[
\begin{aligned}
\frac{d}{d\lambda}c(\lambda)
&= -\frac{1}{\lambda_{\mathrm{plan}}}\,
\operatorname{Var}_{\Pi_t(\lambda)}(\bar{\ell}_h)
\;\le\;0.
\end{aligned}
\]
Therefore $g(\lambda):=M_t c(\lambda)-K_t$ is continuous and nonincreasing.
If $g(0)\le 0$, then $\lambda^*=0$ satisfies the KKT conditions.
Otherwise $g(0)>0$ and, under feasibility (e.g., $M_t\min_h\bar{\ell}_h\le K_t$),
we have $\lim_{\lambda\to\infty} g(\lambda)\le 0$; thus there exists $\lambda^*>0$ with $g(\lambda^*)=0$.
Monotonicity allows finding $\lambda^*$ via one-dimensional bisection, and the constraint holds with equality when active.
\end{proof}

\subsection{Proof of Submodularity for Eq.~\eqref{eq:evidence_retrieval}}~\label{sec:proof_submodularity_f2}
\begin{proof}
We prove that $F$ is both monotone and submodular.

\noindent\textbf{Monotonicity.}
For any $A\subseteq B$, for every $\mathbf{x}$, we have
$$\max_{d\in A} a(d,\mathbf{x}) \le \max_{d\in B} a(d,\mathbf{x}),$$  hence
$f_{\mathbf{x}}(A)\le f_{\mathbf{x}}(B)$ and therefore $F(A)\le F(B)$.

\noindent\textbf{Submodularity.}
Fix any $A\subseteq B$ and any $e\notin B$.
For each $\mathbf{x}$, let
\[
m_A(\mathbf{x})=\max_{d\in A} a(d,\mathbf{x}),
\qquad
m_B(\mathbf{x})=\max_{d\in B} a(d,\mathbf{x}),
\]
so that $m_A(\mathbf{x})\le m_B(\mathbf{x})$.
The marginal gain of adding $e$ to a set $S$ for term $\mathbf{x}$ is
$$
\begin{aligned}
&f_{\mathbf{x}}(S\cup\{e\})-f_{\mathbf{x}}(S)
=
\max\{m_S(\mathbf{x}),\\
&a(e,\mathbf{x})\}-m_S(\mathbf{x})
=
\max\{0,\,a(e,\mathbf{x})-m_S(\mathbf{x})\}.
\end{aligned}
$$
Since $m_A(\mathbf{x})\le m_B(\mathbf{x})$, we have
\[
\max\{0,\,a(e,\mathbf{x})-m_A(\mathbf{x})\}
\;\ge\;
\max\{0,\,a(e,\mathbf{x})-m_B(\mathbf{x})\},
\]
i.e., the marginal gain under $A$ is at least that under $B$ for every $\mathbf{x}$.
Summing over $\mathbf{x}\in\mathcal{X}_t$ yields
\[
F(A\cup\{e\})-F(A)\;\ge\;F(B\cup\{e\})-F(B),
\]
so $F$ is submodular. Finally, a sum of submodular functions is submodular, completing the proof.
\end{proof}

\section{Details for LLM-guided Partial Taxonomy Extraction}\label{sec:details_for_LLM_extractor}
\subsection{Regex Matching for Extracting Related Work Sections}
To acquire author-curated knowledge about how a paper organizes prior studies around a given research topic, we extract \emph{Related Work} (and semantically equivalent) sections using a lightweight rule-based pipeline based on regular-expression matching.
In the arXiv dataset, the LaTeX2HTML pages structure the document into section-level HTML blocks annotated with the CSS class \texttt{ltx\_section}.
We first segment each paper into \texttt{ltx\_section} units, parse the corresponding section titles, and normalize them (lowercasing and stripping punctuation and numbering).
We then match titles against a case-insensitive library of candidate patterns that cover common heading variants, including \emph{Related Work}, \emph{Background}, \emph{Literature Review}, \emph{Prior/Previous Work}, and compound forms such as \emph{Background and Related Work}.
If multiple sections match, we prioritize the earliest highest-level match (to avoid appendices) and include its immediate subsections when they are topically consistent.
Finally, we convert the matched section(s) into plain text for downstream processing, removing non-prose artifacts (e.g., figure/table captions and equations) while preserving in-text citation markers to retain bibliographic cues.

\subsection{Prompt Design}
Given an extracted related-work text span and a target research topic, we use a structured prompt to instruct an LLM to:
(1) identify the categorization scheme implicitly or explicitly used by the authors to organize prior work, and
(2) output these categories as a machine-readable taxonomy in JSON.
To improve reliability and ease of parsing, we constrain the model to a fixed JSON schema (a rooted tree with optional intermediate levels) and explicitly prohibit free-form commentary.
We further require topical filtering: the model must keep only categories that are directly relevant to the given research topic, and discard generic or off-topic buckets (e.g., ``other applications'') unless they contain topic-specific subcategories.
To reduce hallucination, we instruct the model to ground each extracted node in the provided text (e.g., by attaching a short supporting span) and to return an empty taxonomy when the section does not provide a discernible categorization.

\section{Details for Experiment Setting}\label{sec:details_for_experiment_setting}
\subsection{Survey Papers}
The survey papers used in our benchmark are listed in Table~\ref{tab:survey_paper}. They are primarily drawn from high-impact venues, including ACM Computing Surveys (CSUR), IEEE Transactions on Knowledge and Data Engineering (TKDE), and ACM Transactions on Information Systems (TOIS).
\begin{table*}[t]
\centering
\caption{Details of the benchmark survey papers}\label{tab:survey_paper}
\small
\setlength{\tabcolsep}{6pt}
\begin{tabularx}{\linewidth}{X l c}
\toprule
\textbf{Title} & \textbf{Venue} & \textbf{Year} \\
\midrule
Extended Reality (XR) Toward Building Immersive Solutions: The Key to Unlocking Industry 4.0
& \emph{ACM Computing Surveys} & 2024 \\

Self-tuning Database Systems: A Systematic Literature Review of Automatic Database Schema Design and Tuning
& \emph{ACM Computing Surveys} & 2024 \\

HTAP Databases: A Survey
& \emph{IEEE Transactions on Knowledge and Data Engineering} & 2024 \\

A Survey on Employing Large Language Models for Text-to-SQL Tasks
& \emph{ACM Computing Surveys} & 2026 \\

Graph Retrieval-Augmented Generation: A Survey
& \emph{ACM Transaction on Information Systems} & 2025 \\

A Survey of Distributed Graph Algorithms on Massive Graphs
& \emph{ACM Computing Surveys} & 2025 \\

Evolving Paradigms in Automated Program Repair: Taxonomy, Challenges, and Opportunities
& \emph{ACM Computing Surveys} & 2025 \\

Artificial Intelligence Applied to Software Testing: A Tertiary Study
& \emph{ACM Computing Surveys} & 2024 \\

Machine Unlearning: A Survey
& \emph{ACM Computing Surveys} & 2024 \\

Fairness in Machine Learning: A Survey
& \emph{ACM Computing Surveys} & 2024 \\

A Systematic Literature Review on Multi-Robot Task Allocation
& \emph{ACM Computing Surveys} & 2025 \\

Multimodal Recommender Systems: A Survey
& \emph{ACM Computing Surveys} & 2025 \\
\bottomrule
\end{tabularx}
\end{table*}

\noindent\textbf{Budgeted Retrieval and Learned Query Planning.}
A long line of data-management work has studied how to retrieve or process data under explicit resource constraints, including budget-constrained top-$k$ and cascade ranking that progressively applies more expensive scorers to shrinking candidate sets~\cite{wang2011cascade,matsubara2020cascade}, learned query optimizers and cost models that estimate per-plan latency to choose execution strategies (e.g., Neo~\cite{marcus2019neo}, Bao~\cite{marcus2021bao}, learned cardinality estimators~\cite{kipf2019mscn,yang2019deepUnsup}), and budgeted active retrieval that selects which documents or labels to acquire under a fixed query budget~\cite{settles2009active}. The unifying assumption across these works is that the \emph{query} (or the workload distribution) is given and fixed, and the optimization target is to minimize cost or maximize relevance for that query. \method{} addresses a fundamentally different regime in which the queries themselves do not yet exist: at each iteration, we must \emph{synthesize} candidate exploration directions from the current taxonomy via hypothesized concept boxes, score how strongly the incoming paper batch can support each direction (utility anchoring), and only then allocate token and API budget across these self-generated queries (Sec.~\ref{sec:planning}). The resulting KL-regularized allocation admits a closed-form softmax solution over $(U_h - \lambda^* \bar{\ell}_h)$, which to our knowledge has no direct counterpart in fixed-query budgeted retrieval or learned query planning. In this sense, \method{} extends budget-aware retrieval from query \emph{execution} to query \emph{generation} under an evolving structural target.

\end{document}